\newcommand{\si}{\sigma}
\renewcommand{\th}{\theta}
\newcommand{\ga}{\gamma}
\newcommand{\eps}{\varepsilon}
\renewcommand{\phi}{\varphi}
\newcommand{\scr}[1]{{\mathcal #1}}
\newcommand{\EE}{\mathbb{E}}
\newcommand{\PP}{\mathbb{P}}
\newcommand{\RR}{\mathbb{R}}
\renewcommand{\tilde}{\widetilde}
\providecommand{\trace}{{\operatorname{tr}}}
\newcommand{\Ell}{{\mathcal L}}
\newcommand{\rara}[1]{\renewcommand{\arraystretch}{#1}}
\newcommand{\en}{\mathrm{end}}
\mathchardef\given="626A
\newcommand{\bem}{\begin{bmatrix}}
\newcommand{\enm}{\end{bmatrix}}
\newtheorem{thm}{Theorem}[section]
\newtheorem{lemma}[thm]{Lemma}
\newtheorem{corr}[thm]{Corollary}
\theoremstyle{definition}
\newtheorem{algorithm}{Algorithm}
\newtheorem{rem}[thm]{Remark}
\newtheorem{ex}[thm]{Example}
\newtheorem{defn}[thm]{Definition}
\newtheorem{ass}[thm]{Assumption}
\newcommand{\dd}{{\,\mathrm d}}
\newcommand{\DD}{{\,\mathrm D}}
\newcommand{\T}{{\prime}}
\newcommand{\even}{\text{even}}
\newcommand{\odd}{\text{odd}}
\begin{document}

\begin{frontmatter}

\title{Bayesian estimation of incompletely  observed diffusions}

\runtitle{Bayesian estimation of diffusions}

\begin{aug}
\author{Frank van der Meulen\ead[label=e1]{f.h.vandermeulen@tudelft.nl} and
Moritz Schauer\ead[label=e2]{m.r.schauer@math.leidenuniv.nl}}

\runauthor{Van der Meulen and Schauer}

\affiliation{Delft University of Technology}

\address{Delft Institute of Applied Mathematics (DIAM) \\
Delft University of Technology\\
Mekelweg 4\\
2628 CD Delft\\
The Netherlands\\
\printead{e1}\\[1em]
Mathematical Institute\\
Leiden University\\
P.O. Box 9512\\
2300 RA Leiden\\
The Netherlands\\
\printead{e2}}

\end{aug}

\begin{abstract}.
We present a general framework for Bayesian estimation of incompletely observed multivariate diffusion processes. Observations are assumed to be discrete in time, noisy and incomplete. We assume the drift and diffusion coefficient depend on an unknown parameter. A data-augmentation algorithm for drawing from the posterior distribution is presented which is based on simulating  diffusion bridges conditional on a noisy incomplete observation at an intermediate time. The dynamics of such filtered bridges are derived and it is shown how these can be simulated using a generalised version of the guided proposals introduced in \cite{Schauer}.

\medskip

\noindent
 \emph{ {Keywords:} data augmentation; enlargement of filtration; guided proposal; filtered bridge; smoothing diffusion processes; innovation scheme;  Metropolis-Hastings; multidimensional diffusion bridge; partially observed diffusion}
 \end{abstract}

\begin{keyword}[class=MSC]
\kwd[Primary ]{62M05, 60J60}
\kwd[; secondary ]{ 62F15}
\kwd{65C05}
\end{keyword}
%

\end{frontmatter}

\numberwithin{equation}{section}

\section{Introduction}

We consider Bayesian estimation for incompletely, discretely observed multivariate diffusion processes. 
Suppose $X$ is a multidimensional diffusion with time dependent drift $b\colon\, \RR_+\times \RR^d \to \RR^d$ and time dependent dispersion coefficient $\si\colon\, \RR_+\times \RR^d \to \RR^{d\times d'}$ governed by the stochastic differential equation (SDE) 
\begin{equation}\label{eq:sde} \dd X_t = b(t,X_t) \dd t + \si(t,X_t) \dd W_t. \end{equation}
The process $W$ is a vector valued process in $\RR^{d'}$ consisting of independent Brownian motions. 
Denote observation times by $0=t_0 < t_1<\cdots < t_n$. 
Denote $X_i\equiv X_{t_i}$ and assume observations 
\[	V_i=L_i X_{i} +\eta_i,\qquad i=0,\ldots, n,  \] where $L_i$ is a  $m_i \times d$-matrix. The random variable $\eta_i$ is assumed to have a continuous density $q_i$, which may for example  be the $N_{m_i}(0,\Sigma_i)$-density. Further, we assume $\eta_0,\ldots, \eta_n$ is a sequence of independent random variables, independent of the diffusion process $X$. 
This setup includes full observations in case $L_i=I_d$ (the identity matrix of dimension $d\times d$). Further, if $m_k<d$ we have  observations  that are in a plane of dimension strictly smaller than $d$, with error superimposed. Suppose $b$ and $\si$ depend on an unknown finite dimensional parameter $\th$.  Based on the information set
\[	\scr{D}:=\{V_i,\, i=0,\ldots, n\} \] we wish to infer $\th$ within the Bayesian paradigm. 

From an applied point of view, there are many motivating examples that correspond to the outlined problem.  As a first example,  in chemical kinetics the evolution of concentrations of particles of different species is modeled by  stochastic differential equations. In case it is only possible to  measure the cumulative concentration of two species but not the single concentrations,  we have incomplete observations with $L=\begin{bmatrix} 1 & 1\end{bmatrix}$. A second example is given by stochastic volatility models used in finance, where the  volatility process is unobserved. If the price of an asset is the first component of the model and the latent volatility the second component, then we have incomplete observations with $L=\begin{bmatrix} 1 & 0\end{bmatrix}$. Note that in our setup the way in which the observations are incomplete need not be the same at all observation times (that is, $L_i$ may differ from $L_j$ for $i\neq j$). Hence, missing data fit naturally within our framework. 

\subsection{Related work}
Even in case of full discrete time observations the described problem is hard as no closed form expression for the likelihood can be written down, aside from some very specific easy cases.  To work around this problem, data-augmentation has been proposed where the latent data are the missing diffusion bridges that connect the discrete time observations. See for instance \cite{RobertsStramer}, \cite{ChibPittShephard}, \cite{BeskosPapaspiliopoulosRobertsFearnhead}, \cite{Voss}, \cite{MR2422763}, \cite{GolightlyWilkinsonChapter},  \cite{Fuchs}, \cite{PapaRobertsStramer}, and \cite{Schauer2}.   The resulting algorithm has been shown to be successful provided one is able to draw diffusion bridges between two adjacent discrete time observations efficiently. A major simplification that the fully observed case brings  is that diffusion bridges can be simulated independently. The latter property is lost in case of incomplete observations: the latent process between times $t_{i-1}$ and $t_i$ depends  on {\it all} observations  $V_0, V_{1},\ldots, V_n$.  This dependence may seem to imply that it is infeasible to draw such diffusion bridges. Indeed this is hard, but is in fact not necessary as  we can draw $(X_t,\, t\in [0,T])$ in blocks.  This idea has appeared in several papers. Both \cite{MR2422763} and \cite{Fuchs}  consider the case where $L_i=I_d$  with possibly several rows removed (which corresponds to not observing corresponding components of the diffusion).  
For $i<j$   set  $X_{(i:j)}=\{X_t,\, t\in (t_i, t_j\}$. \cite{MR2422763} discretise the SDE and construct an algorithm according to the steps:
\begin{enumerate}
\item Initialise $X_{(0:n)}$ and $\th$. 
\item For $i=0,\ldots, n-2$, sample filtered diffusion bridges $X_{(i: i+2)}$, conditional on $X_{i}$, $V_{i+1}$, $X_{i+2}$  and $\th$. Sample $X_{(0:1)}$ conditional on $V_0$, $X_1$ and $\th$. Sample $X_{(n-1:n)}$ conditional on $X_{n-1}$, $V_n$ and $\th$. 
\item Sample $\th$ conditional on  $X_{(0:n)}$.
\end{enumerate}
In fact, the second step is carried out slightly differently using the ``innovation scheme'', as we will discuss shortly (moreover, updating the first and last segment requires special care). 
\cite{Fuchs} (section 7.2) proposes a similar algorithm using  some variations on carrying out the second step. 
In both references, bridges are proposed based on the Euler discretisation of the SDE for $X$ with $b\equiv 0$ and accepted using the Metropolis-Hastings rule. In case of either strong nonlinearities in the drift or low sampling frequency this can lead to very low acceptance probabilities. 

A diffusion bridge is an infinite-dimensional random variable. The approach taken in \cite{MR2422763} and \cite{Fuchs} is to approximate this stochastic process by a finite-dimensional vector and next carry out simulation. \cite{PapaspiliopoulosRoberts} call this the projection-simulation strategy and advocate the simulation-projection strategy where an appropriate Monte-Carlo scheme is designed that operates on the infinitely-dimensional space of diffusion bridges. For practical purposes it needs to be discretised but the  discretisation error can be eliminated by letting the mesh-width tend to zero. This implies that the algorithm is valid when taking this limit. We refer to \cite{PapaspiliopoulosRoberts} to a  discussion on additional advantages of the simulation-projection strategy, which we will employ in this paper. 

Within the simulation-projection setup a particular version of the problem in this article has been treated  in   the unpublished Ph.D.\ thesis \cite{Jensen} (chapter 6). Here, it is assumed that certain components of the diffusion are unobserved, whereas the remaining components are observed discretely without error. A major limitation of this work is that it is essential that the diffusion can be transformed to unit diffusion coefficient. 

\bigskip

Besides potentially difficult simulation of diffusion bridges, there is another well known problem related to MCMC-algorithm for the problem considered. In case there are unknown parameters in the diffusion coefficient $\si$, any MCMC-scheme that includes the latent diffusion bridges leads to a scheme that is reducible. The reason for this is that a continuous sample path fixes the diffusion coefficient by means of its quadratic variation process. This phenomenon was first discussed in \cite{RobertsStramer} and a solution to it was proposed in both \cite{ChibPittShephard} and  \cite{MR2422763} within the projection-simulation setup. The resulting algorithm is referred to as the innovation scheme, as the innovations of the bridges are used as auxiliary data, instead of the discretised bridges themselves. A slightly more general solution was recently put forward in \cite{Schauer2} using the simulation-projection setup.

\subsection{Approach}\label{subsec:approach}
Assume without loss of generality that $n$ is even. The basic idea of our algorithm consists of iterating steps $(2)-(4)$ of the following algorithm:
\begin{enumerate}
\item Initialise $X_{(0:n)}$ and $\th$. 
\item For $i=1,\ldots, n/2$, sample filtered diffusion bridges $X_{(2i-2: 2i)}$, conditional on $X_{2i-2}$, $V_{2i-1}$, $X_{2i}$  and $\th$. 
\item Sample $\th$ conditional on  $X_{(0:n)}$.
\item For $i=1,\ldots, n/2-1$, sample filtered diffusion bridges $X_{(2i-1: 2i+1)}$, conditional on $X_{2i-1}$, $V_{2i}$, $X_{2i+1}$  and $\th$. Sample $X_{(0:1)}$ conditional on $V_0$, $X_1$ and $\th$. Sample $X_{(n-1:n)}$ conditional on $V_n$, $X_{n-1}$ and $\th$. 
\item Sample $\th$ conditional on  $X_{(0:n)}$.
\end{enumerate}
 
Steps (2) and (4) boil down to sampling independent bridges of the type depicted in figure \ref{fig:filteredbridge}. 
\begin{figure}
\centering
\includegraphics[scale=0.9]{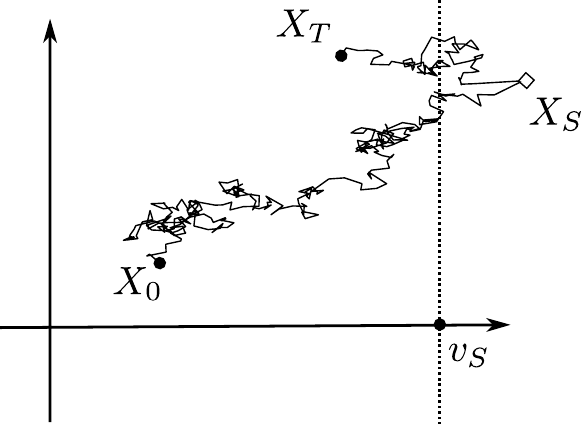}
\caption{Illustration of filtered bridges in case $L=[1\:\: 0]$ (only the first component of the diffusion is observed with error). 
Filled circles: $x_0$ and $x_T$ (fully observed). At time $S$, $v_S$ is observed; $x_S$ is unobserved. }
\label{fig:filteredbridge}
\end{figure}
Here, we have complete observations $x_0$ and $x_T$ at times $0$ and $T$ respectively, and an incomplete observation $v_S$  in between at time $S \in (0,T)$. We need to simulate a bridge connecting $x_0$ and $x_T$, while taking care of the incomplete observation at time $S$.
 For $t\in (0,S]$ this means that we  need to incorporate 2 future conditionings, an incomplete (noisy) observation at time $S$ and a complete observation at time $T$. As $X$ is Markov and we have a full observation at time $T$ this type of conditional process is independent of all observations after time $T$.  For $t\in (S,T)$ we need to sample a diffusion bridge connecting complete observations at times $S$ and $T$. The latter case has been researched in many papers over the past 15 years. See for instance  \cite{Eraker}, \cite{ElerianChibShephard}, \cite{DurhamGallant}, \cite{LinChenMykland}, \cite{BeskosPapaspiliopoulosRobertsFearnhead}, \cite{DelyonHu}, \cite{Schauer}, \cite{Stuart}, \cite{Bladt} and references therein. However, simulation of a bridge that is conditioned on one incomplete noisy observation ahead and one more complete  observation further ahead is clearly more difficult. We call such a bridge a filtered (diffusion) bridge. To the best knowledge of the authors, the problem of simulating such filtered bridges hasn't been studied in a continuous time setup. 

  Using the theory of initial enlargement of filtrations, we show in section \ref{sec:gp} that the filtered bridge process is a diffusion process itself with dynamics described by the stochastic differential equation
\[ \dd X^\star_t = b(t,X^\star_t) \dd t + \si(t,X^\star_t) \dd W_t + a(t,X^\star_t) r(t,X^\star_t) \dd t, \qquad X^\star_0=x_0 \]
Here, $a=\sigma \sigma^\T$ and the function $r$ depends both on the unknown transition density $p$ and the error density $q$. 
This SDE is derived by adapting  results on partially observed diffusions obtained by  \cite{Marchand}. 

As $p$ is intractable, direct simulating of filtered bridges from this SDE is infeasible. However, if we replace $p$ with the transition density $\tilde{p}$ of an auxiliary process $\tilde{X}$, then we can replace $r$ with the function $\tilde{r}$, where $\tilde{r}$ depends on $\tilde{p}$ in exactly the same way as $r$ depends on $p$. Exactly this approach was pursued in \cite{Schauer} in case of full observations. Naturally we choose the process $\tilde{X}$ to have  tractable transition densities. We concentrate on linear processes, where $\tilde{X}$ satisfies the SDE
\[ \dd \tilde{X}_t = \left(\tilde\beta(t) + \tilde{B}(t) \tilde{X}_t\right) \dd t + \tilde\sigma(t) \dd W_t. \]   Next, we can  simulate from the process $X^\circ$ defined by 
\[ \dd X^\circ_t = b(t,X^\circ_t) \dd t + \si(t,X^\circ_t) \dd W_t + a(t,X^\circ_t) \tilde{r}(t,X^\circ_t) \dd t, \qquad X^\circ_0=x_0 \]
instead of $X^\star$. Deviations of $X^\circ$ from $X^\star$ can be corrected by importance sampling or an appropriate acceptance probability in a Metropolis-Hastings algorithm, provided the laws of $X^\circ$ and $X^\star$ (considered as Borel measures on $C[0,T]$) are absolutely continuous.  Precise conditions for the required absolute continuity are derived in section \ref{sec:ac}. Comparing the forms of the SDE's for $X$ and $X^\circ$ we see that an additional {\it guiding term} appears in the drift for $X^\circ$. For this reason, similar as in \cite{Schauer}, we call realisations of $X^\circ$ {\it guided proposals}.

In section \ref{sec:est} we show how the innovation scheme of \cite{Schauer2} can be adopted   to the incompletely observed case considered here. Compared to \cite{Jensen} this scheme removes the restrictive assumption that the diffusion can be transformed to unit diffusion coefficient. As a more subtle important additional bonus, the scheme enables adapting the innovations to the proposals used for simulating bridges (for additional discussion on this topic we refer to \cite{Schauer2}).

A byproduct of our method is that we reconstruct paths from the incompletely observed diffusion process, which is often called smoothing in the literature. 
 
\subsection{Outline of this paper}
In section \ref{sec:gp} we derive the stochastic differential equation for the filtered bridge process corresponding to figure \ref{fig:filteredbridge}. Based on this expression we define guided proposals for filtered bridges.  In section \ref{sec:pull} we derive closed form expressions for the dynamics of the proposal process in case the measurement error is Gaussian. In section \ref{sec:ac} we provide sufficient conditions for absolute continuity of  the laws of the proposal process and true filtered bridge process. This is complemented with a closed form expression for the Radon-Nikodym derivative. The innovation scheme for estimation is presented in section \ref{sec:est}. 
The proofs of a couple of results are collected in the appendix. 

\subsection{Notation: derivatives}\label{sec:derivative-notation}
For $f\colon \RR^{m} \to \RR^n$ we denote by $\DD f$ the $m\times n$-matrix with element $(i,j)$ given by 
$\DD_{ij} f(x)= ({\partial f_j}/{\partial x_i})(x)$.
If  $n=1$, then   $\DD f$ is the column vector containing all partial derivatives of $f$. In this setting we write the $i$-th element of $\DD f$ as $\DD_i f(x)= ({\partial f}/{\partial x_i})(x)$ and denote $\DD^2 f=\DD(\DD f)$ so that $\DD^2_{ij} f(x) =
{\partial^2 f(x)}/({\partial x_i\partial x_j})$.
Derivatives with respect to time are always written as $\partial / \partial t$.

\section{Guided proposals for filtered bridges}\label{sec:gp}

Consider the filtered probability space $(\Omega, \scr{F}, (\scr{F}_t)_{t\ge 0}, \PP)$. Assume $(W_t)_{t\ge 0}$ is an $\scr{F}_t$-adapted Brownian motion. Let $X$ be a strong solution to  the SDE given in equation \eqref{eq:sde} on this setup. 

Throughout we assume $0 < S < T$. At times $0$ and $T$ we assume full observations $x_0\in \RR^d$ and $x_T\in \RR^d$ respectively. At time $S$ we assume to have the incomplete  observation $v_S \in \RR^m$ (with $m<d$). Assume that the $m$-dimensional random vector  $\eta$ has density  $q$.

We will shortly derive that the process $X$, conditioned on $Y=(V_S, X_T)$ is a diffusion process itself on a filtered probability space with a new filtration. To derive this result, we employ results of Jacod within the volume \cite{JeulinYor}  on ``grossissements de filtration'' (see also \cite{Jeulin}). Furthermore, we follow the line of reasoning outlined in \cite{Marchand}, where a similar type of problem is dealt with. The results we use are also nicely summarised in section 2 of \cite{Amendinger}.  Define the enlarged filtration by 
\[ \scr{G}_t = \bigcap_{\eps>0} \left(\scr{F}_{t+\eps} \vee \si(Y)\right). \]
The idea is to find the semi-martingale decomposition of the $\scr{F}_t$-Wiener process $W$ relative to $\scr{G}_t$.

Denote the law of the process $X$ started in $x$ at time $s$ by $\mathrm{P}^{(s,x)}$. 
 We assume that $X$ admits smooth transition densities such that  $\mathrm{P}^{(s,x)}(X_\tau \in \!\dd y) = p(s,x; \tau ,y) \dd y$ (with $\tau>s$).
Suppose $t \in [0,S)$. For $v_S \in \RR^m$ and $x_T \in \RR^d$ we have 
\begin{align*} \PP^{(t,x)}\left(V_S \le v_S, X_T \le x_T\right)& = \int \PP^{(S,\xi)}\left(\eta \le v_S-L\xi,  X_T \le x_T\right) p(t,x; S, \xi) \dd \xi \\ & = \int \PP(\eta \le v_S-L\xi)  \PP^{(S,\xi)}\left(X_T \le x_T\right)p(t,x; S, \xi) \dd \xi. \end{align*}
From this we find that for $t\in [0,S)$, $(V_S, X_T) \mid X_t=x$ has density \[ \int p(t,x; S, \xi) p(S, \xi; T, x_T) q(v_S-L \xi) \dd \xi \] with respect to Lebesgue measure on $\RR^{m+d}$. Similarly, for $t\in[S,T)$, $X_T \mid X_t = x$ has density $p(t,x; T, x_T)$. 
The function defined in the following definition plays a key role in the remainder. 
\begin{defn}
Suppose $0<S<T$. Define 
\[
	p(t, x; S, v_S; T, x_T) =\begin{cases} \int p(t,x; S, \xi) p(S, \xi; T, x_T) q(v_S-L \xi) \dd \xi &  \quad \text{if}\quad t<S \\ p(t,x; T, x_T)&  \quad \text{if}\quad S\le t <T \end{cases}. 
\]
For notational convenience we write $p(t,x)$ instead of $p(t,x; S, v_S; T, x_T)$, when it is clear from the context what the remaining  four arguments are. To avoid abuse of notation, a transition density is always written with all its four arguments. Define 
\begin{center}\rara{1.2}
\begin{tabular}{l}
$R(t,x) = \log p(t,x), \quad
	 r(t,x) = \DD R(t,x), \quad  H(t,x) = -\DD^2 R(t,x).$\\
\end{tabular}
\end{center}
Here $\DD$ denotes differentiation, with precise conventions outlined in section \ref{sec:derivative-notation}.

\end{defn}

\begin{lemma}\label{lem:partialbridge-sde}
For $t\in [0,T)$, the diffusion conditioned on $V_S=v_S$ and $X_T=x_T$ satisfies the SDE
\begin{equation}\label{xstar} \dd X^\star_t = b(t,X^\star_t) \dd t + \si(t,X^\star_t) \dd \bar{W}_t + a(t,X^\star_t) r(t,X^\star_t) \dd t, \qquad X^\star_0=x_0, \end{equation}
where $\bar{W}_t$ is a $\scr{G}_t$-Brownian motion. 
\end{lemma}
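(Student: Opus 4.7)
The plan is to obtain the SDE by combining Jacod's enlargement-of-filtrations theorem with an It\^o-formula computation of the semimartingale decomposition of the $\scr{F}_t$-Brownian motion $W$ with respect to the enlarged filtration $\scr{G}_t = \bigcap_{\eps>0}(\scr{F}_{t+\eps}\vee\si(V_S,X_T))$, in the spirit of \cite{Marchand}. The key object is the conditional density of $Y=(V_S,X_T)$ given $\scr{F}_t$, evaluated at the observed value $(v_S,x_T)$. Using the Markov property of $X$ and the independence of $\eta$ from $X$, this conditional density is, up to a factor constant in $t$, exactly the function $p(t,X_t;S,v_S;T,x_T)$ introduced in the paper: for $t\in[0,S)$ it is $\int p(t,X_t;S,\xi)p(S,\xi;T,x_T)q(v_S-L\xi)\dd\xi$ by conditioning on $X_S$ and the independence of $\eta$, and for $t\in[S,T)$ it is $q(v_S-LX_S)p(t,X_t;T,x_T)$, whose $X_t$-dependence is carried only by the second factor.

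Next I would establish that, on each of the sub-intervals $[0,S)$ and $[S,T)$, the process $t\mapsto p(t,X_t)$ is an $\scr{F}_t$-martingale. This is a direct consequence of the Chapman--Kolmogorov equation: for $0\le s<t<S$ and any $\xi$, $\EE[p(t,X_t;S,\xi)\mid\scr{F}_s]=p(s,X_s;S,\xi)$ by the Markov property, and integrating against $p(S,\xi;T,x_T)q(v_S-L\xi)\dd\xi$ yields the martingale property for $p(t,X_t;S,v_S;T,x_T)$; on $[S,T)$ the same argument applied to $p(t,X_t;T,x_T)$ suffices. Applying It\^o's formula to $p(t,X_t)$ and using that its finite-variation part vanishes gives
\[ \dd p(t,X_t) = (\DD p)^\T(t,X_t)\,\si(t,X_t)\,\dd W_t, \]
and hence, after dividing by $p$ and recognising $r=\DD p/p=\DD R$,
\[ \dd\langle W,\log p(\cdot,X_\cdot)\rangle_t = \si(t,X_t)^\T r(t,X_t)\,\dd t. \]

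With the density and its martingale representation in hand, Jacod's theorem (as summarised, e.g., in \cite{Amendinger}) yields the semimartingale decomposition of $W$ in the enlarged filtration:
\[ W_t = \bar W_t + \int_0^t \si(s,X_s)^\T r(s,X_s)\,\dd s, \]
where $\bar W$ is a $\scr{G}_t$-Brownian motion. Substituting this decomposition into \eqref{eq:sde} and using $a=\si\si^\T$ produces
\[ \dd X_t = b(t,X_t)\,\dd t + a(t,X_t)r(t,X_t)\,\dd t + \si(t,X_t)\,\dd\bar W_t, \]
which is \eqref{xstar}. Identifying $X^\star$ with the conditioned process completes the argument.

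The main technical obstacle is the verification of Jacod's hypothesis (existence of a regular version of the conditional law of $Y$ given $\scr{F}_t$ admitting a sufficiently regular density), together with the delicate behaviour at $t=S$: the formula for $p(t,x)$ changes regime there, and in fact $\lim_{t\ua S}p(t,x)=q(v_S-Lx)p(S,x;T,x_T)$ differs from $p(S,x)$ by the $X_S$-measurable factor $q(v_S-LX_S)$. Because this factor is constant in $t$ on $[S,T)$, it cancels in the Jacod correction $\dd\langle W,p(\cdot,X_\cdot)\rangle/p(\cdot,X_\cdot)$, so no discontinuity appears in the drift of $\bar W$; nonetheless, making this precise requires applying Jacod's theorem separately on the two intervals $[0,S)$ and $[S,T)$ and checking that the $\scr{G}_t$-Brownian motions so obtained agree. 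Smoothness and integrability of the transition density $p(s,x;\tau,y)$, which are standard under mild non-degeneracy assumptions on $\si$ and regularity of $b,\si$, are implicitly invoked and can be deferred to a standing hypothesis.
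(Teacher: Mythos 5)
Your proposal follows essentially the same route as the paper's proof: identify the conditional density of $Y=(V_S,X_T)$ given $\scr{F}_t$ with $p(t,X_t;S,v_S;T,x_T)$, compute the bracket $\langle W, p(\cdot,X_\cdot)\rangle$ via It\^o's formula, invoke Jacod's enlargement theorem to obtain the semimartingale decomposition of $W$ in $\scr{G}_t$, and substitute back into the SDE. Your explicit use of the martingale property of $p(t,X_t)$ (which the paper defers to a later lemma) and your remarks on the regime change at $t=S$ are compatible refinements rather than a different argument.
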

\begin{proof}
The proof is similar to the proof of Th\'eor\`eme 2.3.4 in \cite{Marchand}. The first step consists of proving that the process $(\bar{W}_t)_{t\ge 0}$ defined by 
\begin{equation}\label{eq:barW} \bar{W}_t = W_t - \int_0^t \si(s,X_s)^\T r(s, X_s; S, V_S; T, X_T) \dd s \end{equation}
is a $(\scr{G}_t)_{t\in [0,T)}$-Brownian motion, independent of $Y$.
For proving this, first define 
\[ k(s, x; S, v_S; T, x_T)= \si(s,x)' \DD\log p(s, x; S, V_S; T, X_T),  \]
where $\DD$ is assumed to act on the second argument of $p$.
For notational convenience we write $k(s,X_s)$ instead of $k(s, X_s; S, v_S; T, x_T)$.
 Then
\begin{equation}\label{eq:lhs} \int_0^t k(s, X_s) p(s, X_s) \dd \langle W\rangle_s =\int_0^t \si(s,X_s)' \DD\,  p(s,X_s) \dd s. \end{equation}
By It\=o's lemma
\begin{align*} \DD\, p(s,X_s) &=   b(s,X_s)'\DD\, p(s,X_s) \dd s + \si(s,X_s)' \DD\, p(s,X_s) \dd W_s \\ & \quad +\frac12\sum_{ij} a_{ij}(s,X_s) \left.  \left(\frac{\partial^2}{\partial x_i \partial x_j} p(s,x)\right)\right|_{x=X_s} \dd s. \end{align*}
Hence
\begin{equation}\label{eq:rhs}
\begin{split} \langle p(s,X_s), W_s\rangle_t &=\left\langle \int_0^t \si(s,X_s)' \DD\, p(s,X_s) \dd W_s, \int_0^t \dd W_s\right\rangle \\ & = \int_0^t \si(s,X_s)'\DD\, p(s,X_s) \dd s.
\end{split}
\end{equation}
Combining equation \eqref{eq:lhs} and \eqref{eq:rhs} gives 
\[ \int_0^t k(s, X_s) p(s, X_s) \dd \langle W\rangle_s =\langle p(s,X_s), W_s\rangle_t. \]
Th\'eor\`eme 2.1 of Jacod (1985)  implies that
\[ W_t - \int_0^t k(s, X_s; S, V_S; T, X_T) \dd \left\langle W \right\rangle_s, \quad t \in [0,T) \] 
is a $(\scr{G}_t)_{t\in [0,T)}$-martingale. By computing the quadratic variation of $\bar{W}$ it is seen that $\bar{W}$ is a $(\scr{G}_t)_{t\in [0,T)}$-Wiener process on $[0,T)$, independent of $\si(Y) \subset \scr{G}_0$. 

Multiplying both sides of equation \eqref{eq:barW} with $\si(t,X_t)$ and plugging in \eqref{eq:sde} gives 
\[ \dd X_t = b(t,X_t) \dd t + \si(t,X_t) \dd \bar{W}_t + a(t,X_t)  r(t, X_t; S, V_S; T, X_T)  \dd t. \]
Next, conditioning on $Y=(V_S, X_T)=(v_S, x_T)$ and using the independence of $\bar{W}$ and $Y$ gives the result. 
\end{proof}

This results demonstrates that the filtered bridge process is a diffusion process itself with an extra term superposed on the drift of the original diffusion process. The term $a(t,x) r(t,x)$ will  be referred to as the {\it pulling term}, as it ensures a pull of the diffusion process to have the right distributions at time $S$ and $T$. 
In case there is no measurement error, we have that for $t<S$
\[   p(t, x) = \int_{\{\xi\colon L \xi \,=\, v_S\}} p(t,x; S, \xi)\, p(S, \xi; T, x_T) \dd \xi.  \]
As the dynamics of the  bridge involve the unknown transition density of the process, it cannot be used directly for simulation purposes. For that reason, we propose to replace $p(\cdot, \cdot;\cdot, \cdot)$ with the transition density $\tilde{p}(\cdot, \cdot;\cdot, \cdot)$ of a process $\tilde{X}$ for which $\tilde{p}$ is tractable to obtain a proposal process $X^\circ$.
\begin{defn}
	{\it Guided proposals} are defined as solutions to the SDE
	\begin{equation}\label{xcirc} \dd X^\circ_t = b(t,X^\circ_t) \dd t + \si(t,X^\circ_t) \dd W_t + a(t,X^\circ_t) \tilde r(t,X^\circ_t) \dd t, \qquad X_0=u \end{equation}
Here $\tilde{r}(t,x) = \DD \log \tilde{p}(t,x)$, where
\[
	\tilde{p}(t, x) =\begin{cases} \int \tilde{p}(t,x; S, \xi) \tilde{p}(S, \xi; T, x_T) \tilde{q}(v_S-L \xi) \dd \xi &  \quad \text{if $t<S$} \\ \tilde{p}(t,x; T, x_T)&  \quad \text{if $S\le t <T$} \end{cases} 
\]
and $\tilde{q}$ is a probability density function on $\RR^m$, with $m=\dim(v_S)$. 
\end{defn}
This approach was initiated in \cite{Schauer}. We will assume throughout that   $\tilde{X}$ is  a linear process: 
\begin{equation}
\label{eq:linproc} \dd \tilde{X}_t = \tilde\beta(t) \dd t + \tilde{B}(t) \tilde{X}_t \dd t  + \tilde{\si}(t) \dd W_t. 
\end{equation}
Define $\tilde{R}(t,x) = \log \tilde{p}(t,x)$,
$\tilde{r}(t,x) = \DD \tilde{R}(t,x)$ and $\tilde{H}(t,x) = -\DD^2 \tilde{R}(t,x)$.

\subsection{Notation: diffusions and guided processes}

 We denote the laws of $X$, $X^\star$ and $X^\circ$ viewed as measures on the space $C([0,t], \RR^d)$ of continuous functions from $[0,t]$ to $\RR^d$ equipped with its Borel-$\sigma$-algebra by  $\PP_t$, $\PP^\star_t$ and $\PP^\circ_t$ respectively. 
For easy reference, the following table summaries the various processes and corresponding measures around.
\begin{center} \rara{1.2}
\begin{tabular}{|l|l|  l |}
\hline
$X$ & original, unconditioned diffusion process, defined by \eqref{eq:sde} & $\PP_t$\\
$X^\star$ & corresponding filtered bridge, conditioned on $v_S$ and $xv_T$, defined by \eqref{xstar}& $\PP^\star_t$\\
$X^\circ$ & proposal process defined by \eqref{xcirc}& $\PP^\circ_t$\\ 
$\tilde X$ & linear process defined by  \eqref{eq:linproc} whose transition densities $\tilde p$ appear & \\ &
in the definition of $X^\circ$& \\
\hline
\end{tabular}
\end{center}
The infinitesimal generator of the diffusion process $X$ is denoted by $\Ell$. 

\subsection{Pulling term induced by a linear process}\label{sec:pull}

In this section we derive closed form expressions for $\tilde{r}$ and $\tilde{H}$.  For the remainder of the this paper we make the following assumption.
\begin{ass}\label{ass:tilde-q}
$\tilde{q}$ is the density of the  $N(0,\Sigma)$ distribution.
\end{ass}
Note that this is an assumption on $\tilde{q}$ which appears in the proposal, and not on $q_i$ which is the density of the error at time $t_i$. 

We start with a recap of a few  well known results on linear processes. See for instance \cite{LiptserShiryayevI}.
Define  the fundamental matrix $\Phi(t)$ as the matrix satisfying
\[	\Phi(t) = I + \int_0^t \tilde{B}(\tau)\Phi(\tau) \dd \tau.\] Set $\Phi(t,s)=\Phi(t)\Phi(s)^{-1}$. 
For a linear process it is known that its transition density $\tilde{p}$ satisfies
\[ \tilde{p}(t,x; S, x_S) = \phi(x_S; \Phi(S,t)x + g_S(t), K_S(t))  \qquad 0\le t<S\]
with 
\begin{equation}\label{eq:def-g} g_S(t) = \int_t^S \Phi(S,\tau)\tilde\beta(\tau) \dd \tau\end{equation}
and
\begin{equation}\label{eq:def-K} K_S(t)= \int_t^S \Phi(S,\tau) \tilde{a}(\tau) \Phi(S,\tau)' \dd \tau. \end{equation}

\begin{lemma}
\label{lem:linearpull}
For $t<S$
\begin{equation} \tilde{r}(t,x)= \begin{bmatrix} L \Phi(S,t) \\ \Phi(T,t) \end{bmatrix}' U(t) \begin{bmatrix} v_S- L g_S(t) -L \Phi(S,t) x \\ v_T-g_T(t)-\Phi(T,t) x\end{bmatrix} 
\end{equation}
and 
\[ \tilde{H}(t) = \begin{bmatrix} L\Phi(S,t) \\ \Phi(S,t)\end{bmatrix}^\T U(t) \begin{bmatrix} L\Phi(S,t) \\ \Phi(S,t)\end{bmatrix}. \]
Here, 
\begin{equation}
\label{eq:U}	U(t)=\begin{bmatrix} L K_S(t) L^\T +\Sigma  & LK_S(t)
\Phi(T,S)' \\ \Phi(T,S) K_S(t) L^\T & K_T(t) \end{bmatrix} ^{-1}. 
\end{equation}
\end{lemma}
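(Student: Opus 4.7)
The plan is to recognize that, for $t<S$, the integral defining $\tilde p(t,x)$ is precisely the joint density of $(V_S, X_T)$ given $\tilde X_t = x$ under the auxiliary dynamics, with $V_S = L\tilde X_S + \eta$ and $\eta\sim N(0,\Sigma)$ independent of $\tilde X$. Because $\tilde X$ is linear (hence Gaussian) and $\eta$ is Gaussian, this joint law is multivariate normal; identifying its mean vector and covariance matrix, then taking $\log$ and differentiating twice in $x$, yields $\tilde r$ and $\tilde H$ by a mechanical computation.

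First, using the explicit Gaussian transition $\tilde p(t,x;u,y) = \phi(y;\Phi(u,t)x + g_u(t), K_u(t))$ together with the tower property and the Markov property of $\tilde X$, I would compute
\begin{align*}
\EE[V_S\mid \tilde X_t=x] &= L\Phi(S,t)x + L g_S(t),\\
\EE[X_T\mid \tilde X_t=x] &= \Phi(T,t)x + g_T(t),\\
\var{V_S\mid \tilde X_t=x} &= L K_S(t) L^\T + \Sigma,\\
\var{X_T\mid \tilde X_t=x} &= K_T(t),\\
\cov{V_S}{X_T \mid \tilde X_t=x} &= L K_S(t)\Phi(T,S)^\T.
\end{align*}
The cross-covariance is obtained from the representation $\tilde X_T = \Phi(T,S)\tilde X_S + g_T(S) + \xi$ with $\xi$ independent of $\tilde X_S$, combined with $V_S = L\tilde X_S + \eta$ and $\eta$ independent of $\tilde X$. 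These five quantities assemble into the $(m+d)\times(m+d)$ block matrix appearing in the statement, whose inverse is $U(t)$ in \eqref{eq:U}.

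Writing $F(t) = \begin{bmatrix} L\Phi(S,t) \\ \Phi(T,t)\end{bmatrix}$ and $w(t) = \begin{bmatrix} v_S - L g_S(t) \\ x_T - g_T(t)\end{bmatrix}$, the log-density then takes the form
\[
\tilde R(t,x) = c(t) - \tfrac12 \bigl(w(t) - F(t)x\bigr)^\T U(t)\bigl(w(t) - F(t)x\bigr),
\]
with $c(t)$ an $x$-independent normalisation. Differentiating once in $x$ gives $\tilde r(t,x) = F(t)^\T U(t)\bigl(w(t) - F(t)x\bigr)$, and differentiating again gives $\tilde H(t) = F(t)^\T U(t) F(t)$, which match the closed forms in the statement.

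The only real obstacle is the covariance-matrix bookkeeping: one must ensure that the observation-noise contribution $+\Sigma$ enters only the upper-left block, and that the cross-covariance picks up a single factor $\Phi(T,S)^\T$ rather than, say, $\Phi(T,t)\Phi(t,S)^\T$. Both points follow cleanly from the independence of $\eta$ and the semigroup identity $\Phi(T,t) = \Phi(T,S)\Phi(S,t)$ together with the Markov property of $\tilde X$, but these are precisely the places where a careless computation would go astray.
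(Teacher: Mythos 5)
Your proposal is correct and follows essentially the same route as the paper: identify $\tilde p(t,x)$ for $t<S$ as the Gaussian density of $(V_S,X_T)$ given $\tilde X_t=x$, assemble its mean and covariance, and differentiate the log-density; the only difference is that you obtain the covariance blocks directly from the linear representation $\tilde X_T=\Phi(T,S)\tilde X_S+g_T(S)+\xi$ and the independence of $\eta$, whereas the paper completes the square in the product of transition densities and inverts the resulting precision matrix via the partitioned-inverse/Woodbury formulas — a purely cosmetic difference. Note in passing that your $\tilde H(t)=F(t)^\T U(t)F(t)$ with $F(t)=\bigl[\begin{smallmatrix}L\Phi(S,t)\\ \Phi(T,t)\end{smallmatrix}\bigr]$ is the consistent outcome of the computation; the lemma as printed has $\Phi(S,t)$ in the second block of $\tilde H$, which appears to be a typo.
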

\begin{proof}
The proof is given in section \ref{subsec:proof:lem:linearpull}. 
\end{proof}

\begin{corr}
\label{corr:pull-witherror}
Assume $\tilde{a}(t)\equiv\tilde{a}$ and $\tilde{B}(t)\equiv 0$. 
Define
\begin{align}\label{eq:defN-Q_eta}	N(t) &= \left(L\tilde{a} L^\T +   \frac{T-t}{(S-t)(T-S)}\Sigma \right)^{-1}
\\
	Q(t) &= L^\T N(t) L. 
\end{align}
Then 
\begin{equation}
\label{eq:noisy-pull} 	\tilde{r}(t,x)  = \begin{cases} Q(t) \frac{h_S(t,x)}{S-t} 
 + \left\{\tilde{a}^{-1}  -Q(t)\right\}\frac{h_T(t,x)}{T-t} & \text{if  } t \in [0,S) \\ \tilde{a}^{-1} \frac{h_T(t,x)}{T-t} & \text{if  } t \in [S,T) \end{cases}
\end{equation}
and 
\[ (T-t)\tilde{H}(t)= \begin{cases}\tilde{a}^{-1} + \frac{T-S}{S-t} Q(t)& \text{if  } t \in [0,S) \\ \tilde{a}^{-1} & \text{if  } t \in [S,T) \end{cases}.  \]
Here, 
\[
h_S(t,x) = u_S-\int_t^S \tilde\beta(\tau)\dd \tau -x \quad \text{and}\quad  h_T(t,x)=x_T-\int_t^T\tilde\beta(\tau)\dd \tau -x
\]
with $u_S$ any vector such that $L u_S = v_S$.

Moreover, 
\[ \lim_{t\uparrow S} \tilde{r}(t,x) = L^\T \Sigma^{-1} L (u_S-x) + \tilde{a}^{-1} \frac{h_T(S,x)}{T-S} \]
\end{corr}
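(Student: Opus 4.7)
The plan is to derive everything from Lemma \ref{lem:linearpull} by specializing to $\tilde{B}(t)\equiv 0$ (so $\Phi(t,s)=I$) and $\tilde{a}(t)\equiv\tilde{a}$, then carrying out an explicit block inversion of $U(t)^{-1}$.

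First I would record the simplifications in the ingredients of Lemma \ref{lem:linearpull}: since $\Phi(t,s)=I$, we get $K_S(t)=(S-t)\tilde{a}$, $K_T(t)=(T-t)\tilde{a}$, $g_S(t)=\int_t^S\tilde\beta(\tau)\dd\tau$, and $v_S-Lg_S(t)-Lx=Lh_S(t,x)$, $x_T-g_T(t)-x=h_T(t,x)$ (using any $u_S$ with $Lu_S=v_S$). So
\[
\tilde{r}(t,x)=\begin{bmatrix}L^\T & I\end{bmatrix}U(t)\begin{bmatrix}Lh_S(t,x)\\ h_T(t,x)\end{bmatrix},\qquad \tilde{H}(t)=\begin{bmatrix}L^\T & I\end{bmatrix}U(t)\begin{bmatrix}L\\ I\end{bmatrix},
\]
with $U(t)^{-1}=\bigl[\begin{smallmatrix}(S-t)L\tilde aL^\T+\Sigma & (S-t)L\tilde a\\ (S-t)\tilde aL^\T & (T-t)\tilde a\end{smallmatrix}\bigr]$.

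The core algebraic step is the block inversion. The Schur complement of the lower-right block is
\[
\bigl[(S-t)L\tilde aL^\T+\Sigma\bigr]-\tfrac{(S-t)^2}{T-t}L\tilde aL^\T=\tfrac{(S-t)(T-S)}{T-t}\Bigl[L\tilde aL^\T+\tfrac{T-t}{(S-t)(T-S)}\Sigma\Bigr]=\tfrac{(S-t)(T-S)}{T-t}N(t)^{-1},
\]
so its inverse is $\tfrac{T-t}{(S-t)(T-S)}N(t)$. Applying the standard block-inverse formula yields
\[
U(t)=\begin{bmatrix}\tfrac{T-t}{(S-t)(T-S)}N(t) & -\tfrac{1}{T-S}N(t)L\\[2pt] -\tfrac{1}{T-S}L^\T N(t) & \tfrac{1}{T-t}\tilde a^{-1}+\tfrac{S-t}{(T-t)(T-S)}Q(t)\end{bmatrix},
\]
after using that $L\tilde a L^\T$ may be replaced with $N^{-1}(t)$ modulo a multiple of $\Sigma$, and noting $L^\T N(t)L=Q(t)$ in the $(2,2)$ correction. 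Substituting into the two quadratic-form expressions and collecting coefficients of $h_S$ and $h_T$ produces the identities
\[
\tfrac{T-t}{(S-t)(T-S)}-\tfrac{1}{T-S}=\tfrac{1}{S-t},\qquad -\tfrac{1}{T-S}+\tfrac{S-t}{(T-t)(T-S)}=-\tfrac{1}{T-t},
\]
which yield the stated formula for $\tilde r(t,x)$ on $[0,S)$. The formula for $(T-t)\tilde H(t)$ follows from the analogous identity
\[
\tfrac{T-t}{(S-t)(T-S)}-\tfrac{2}{T-S}+\tfrac{S-t}{(T-t)(T-S)}=\tfrac{((T-t)-(S-t))^2}{(S-t)(T-t)(T-S)}=\tfrac{T-S}{(S-t)(T-t)}.
\]

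For $t\in[S,T)$, $\tilde p(t,x)=\tilde p(t,x;T,x_T)$ is a Gaussian density in $x$ (by the linear-process assumption) with mean $x_T-g_T(t)$ and variance $(T-t)\tilde a$, from which the expressions on $[S,T)$ follow directly. Finally, for the left-limit at $S$, observe that $N(t)^{-1}$ is dominated by the $\tfrac{T-t}{(S-t)(T-S)}\Sigma$ term as $t\uparrow S$, so $N(t)\sim\tfrac{(S-t)(T-S)}{T-t}\Sigma^{-1}$ and
\[
\tfrac{Q(t)}{S-t}\longrightarrow L^\T\Sigma^{-1}L,\qquad Q(t)\longrightarrow 0,\qquad h_S(t,x)\longrightarrow u_S-x,
\]
which, together with $h_T(t,x)/(T-t)\to h_T(S,x)/(T-S)$, gives the claimed limit. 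The main obstacle is just keeping the block-matrix algebra and the three rational-function simplifications organized; once those three identities are in hand, every remaining step is bookkeeping.
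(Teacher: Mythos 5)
Your proof is correct and follows essentially the same route as the paper: specialize Lemma \ref{lem:linearpull} to $\Phi\equiv I$, invert $U(t)^{-1}$ by the Schur complement of the lower-right block, and collect the coefficients of $h_S$ and $h_T$ via the same rational-function identities, with the same dominance argument for the limit $t\uparrow S$. (Your $(2,2)$ block of $U(t)$, with the plus sign $\tfrac{1}{T-t}\tilde a^{-1}+\tfrac{S-t}{(T-t)(T-S)}Q(t)$, is the correct one from the standard partitioned-inverse formula; the paper's displayed $U(t)$ carries a sign typo there but uses the correct sign in its subsequent computation of $W_2$.)
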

\begin{proof}
In this case we can carry out the inversion in equation \eqref{eq:U} in closed form. The proof is given in section \ref{sec:corr:pull-witherror}.
\end{proof}  
\begin{rem}
Suppose $L=I_{d\times d}$ and $\Sigma=0_{d\times d}$ which corresponds to a full observation at time $S$ without error. Then $Q(t)=\tilde{a}^{-1}$ and the second term in $\tilde{r}$ (for $t<S$) disappears. 
Furthermore then, $\tilde{H}(t)=\tilde{a}^{-1} (S-t)^{-1}$.  In this way, we recover the result for the full observation case. 
\end{rem}

\begin{ex}  Suppose $X_t$ is a two-dimensional Brownian Motion, where we only observe the first component at time $S$ and both at time $T$. In this case $L=[1, 0]$,   $g\equiv 0$ and $\Phi=I_{2\times 2}$. It is easy to see that 
\[	N(t) = \left(1 + \frac{T-t}{(S-t)(T-S)} \Sigma\right)^{-1}=\frac{(S-t)(T-S)}{(S-t)(T-S)+\Sigma (T-t)}. \]
By corollary \ref{corr:pull-witherror}, it follows that for $t<S$
\[	r(t,x)=\begin{bmatrix} 1  \\ 0 \end{bmatrix} N(t) \frac{v_S-L x}{S-t} + \frac{x_T-x}{T-t}- \begin{bmatrix} 1 \\ 0 \end{bmatrix} N(t)  \begin{bmatrix} 1  & 0 \end{bmatrix} \frac{v_T-x}{T-t}.
\]
Denote the $i$-th component of a vector $x$ by $x^{(i)}$. 
The first component of $r$ equals
\[ N(t) \frac{v_S-x^{(1)}}{S-t} +(1-N(t)) \frac{x_T^{(1)}-x^{(1)}}{T-t}, \]
while the second component equals $(T-t)^{-1} \left(x_T^{(2)}-x^{(2)}\right)$. From this, we see that the second component is the same as when there would be no conditioning at time $S$. 
\end{ex}

\section{Absolute continuity result}\label{sec:ac}

In this section we derive conditions for which $\PP^*_T \ll \PP^\circ_T$ and give a closed form expression for the Radon-Nikodym derivative. 
We have the following assumption on $X$.
\begin{ass}\label{ass:X}
\begin{enumerate}
\item The functions $b$ and $\si$ are uniformly bounded, Lipschitz in both arguments and satisfy a linear growth condition on their second argument.
\item   Kolmogorov's backward equation holds: 
\[ \frac{\partial}{\partial s} p(s,x; t,y) = (\scr{L} p)(s,x; t,y)=0. \]
Here $\scr{L}$  acts on $(s, x)$.
\item Uniform ellipticity: there exists an $\eps>0$ such that for all $s\in [0,T]$, $x\in \RR^d$ and $y\in \RR^d$
\[ y^\T a(s,x) y \ge \eps \|y\|^2. \]
\end{enumerate}
\end{ass}
We have the following assumption on $\tilde X$.
\begin{ass}\label{ass:Xtilde}
$\tilde{B}$ and $\tilde\beta$ are continuously differentiable on $[S,T]$, $\tilde\sigma$ is Lipschitz on $[S,T]$ and there exists a $\tilde\eps>0$ such that for all $t\in [S, T]$ and all $y\in \RR^d$,
\[ y^\T \tilde{a}(t) y \ge \tilde\eps \|y\|^2. \] 
 \end{ass}
 
\begin{thm}\label{thm:equivalence}
Suppose  assumptions \ref{ass:X} and \ref{ass:Xtilde} apply.
Define 
\begin{equation}
\label{eq:defpsi} \Psi(X^\circ;t)=\exp\left(\int_0^t G(s,X^\circ_s) \dd s \right), \quad t < T, 
\end{equation}
where
\begin{align}\label{eq:G} G(s,x) &= (b(s,x) - \tilde b(s,x))^\T \tilde r(s,x) \nonumber \\ & \qquad -  \frac12 \trace\left(\left[a(s, x) - \tilde a(s, x)\right] \left[\tilde H(s,x)-\tilde{r}(s,x)\tilde{r}(s,x)^\T\right]\right).
\end{align}
 
 If $\tilde{a}(T)=a(T,x_T)$, then $X^\star$ and $X^\circ$ are equivalent on $[0,T]$ with Radon-Nikodym derivative given by 
\[	\frac{\dd\PP^\star_T}{\dd\PP^\circ_T} (X^\circ) = \frac{\tilde{p}(0,u)}{p(0,u)} \frac{q(v_S-LX^\circ_S)}{\tilde{q}(v_S-LX^\circ_S)} \Psi(X^\circ;T).\]
 
\end{thm}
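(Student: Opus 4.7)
My strategy adapts the argument of \cite{Schauer} for ordinary diffusion bridges to the filtered-bridge setting. Two pieces of bookkeeping are new: the kink of the conditional density $p(\cdot,\cdot;S,v_S;T,x_T)$ at time $S$ that encodes the intermediate observation, and the discrepancy between the true error density $q$ and the Gaussian surrogate $\tilde q$ used by the proposal.

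For $t\in[0,S)$ I first combine a Doob h-transform for $X^\star$ with a Girsanov change of measure for $X^\circ$. By Lemma \ref{lem:partialbridge-sde} together with Kolmogorov's backward equation (Assumption \ref{ass:X}(2)), $p(\cdot,\cdot;S,v_S;T,x_T)$ is space--time harmonic for $\scr L$ on $[0,S)$, so $\dd\PP^\star_t/\dd\PP_t = p(t,X_t)/p(0,u)$. Since $X^\circ$ and $X$ share dispersion $\sigma$ and differ in drift by $a\tilde r$, Girsanov's theorem---legitimate under Assumptions \ref{ass:X} and \ref{ass:Xtilde}---represents $\dd\PP^\circ_t/\dd\PP_t$ as the exponential martingale driven by $\sigma^\T\tilde r$. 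Dividing and transferring to the trajectory of $X^\circ$ driven by $W^\circ$ yields
\[
\frac{\dd\PP^\star_t}{\dd\PP^\circ_t} = \frac{p(t,X^\circ_t)}{p(0,u)}\,\exp\!\left(-\int_0^t\tilde r^\T\sigma\,\dd W^\circ_s - \tfrac12\int_0^t\tilde r^\T a\tilde r\,\dd s\right).
\]

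Next I eliminate the Itô integral. Applying Itô's formula to $\log\tilde p(s,X^\circ_s)$ and using the backward equation $\partial_s\tilde p+\tilde{\scr L}\tilde p=0$ for the linear surrogate process, together with $\DD\log\tilde p=\tilde r$, $\DD^2\log\tilde p=-\tilde H$ and $\DD^2\tilde p/\tilde p=\tilde r\tilde r^\T-\tilde H$, a direct calculation---in which $(\scr L-\tilde{\scr L})\log\tilde p$ produces exactly the drift and trace terms of \eqref{eq:G}---gives
\[
\log\tilde p(t,X^\circ_t) - \log\tilde p(0,u) = \int_0^t\tilde r^\T\sigma\,\dd W^\circ_s + \int_0^t\!\left[G(s,X^\circ_s) + \tfrac12\tilde r^\T a\tilde r\right]\dd s.
\]
Substituting converts the Girsanov exponent into $\tilde p(0,u)/\tilde p(t,X^\circ_t)\cdot\Psi(X^\circ;t)$, so that for $t\in[0,S)$,
\[
\frac{\dd\PP^\star_t}{\dd\PP^\circ_t} = \frac{\tilde p(0,u)}{p(0,u)}\cdot\frac{p(t,X^\circ_t)}{\tilde p(t,X^\circ_t)}\cdot\Psi(X^\circ;t).
\]
An identical argument on $(S,T)$---now with h-functions $p(\cdot,\cdot;T,x_T)$ and $\tilde p(\cdot,\cdot;T,x_T)$---yields the same formula there.

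Finally I take the two singular limits. As $s\uparrow S$ the kernel $p(s,\cdot;S,\xi)$ concentrates on $X^\circ_s$, and dominated convergence (using continuity of $q,\tilde q$ and of the forward densities, with an integrable upper bound from uniform ellipticity) yields $p(s,X^\circ_s)\to p(S,X^\circ_S;T,x_T)\,q(v_S-LX^\circ_S)$ and likewise for $\tilde p$ with $\tilde q$. Gluing the two formulas at $s=S$ therefore contributes exactly the factor $q(v_S-LX^\circ_S)/\tilde q(v_S-LX^\circ_S)$ to the Radon--Nikodym derivative. The main obstacle is the limit $t\uparrow T$, where both $p(t,X^\circ_t;T,x_T)$ and $\tilde p(t,X^\circ_t;T,x_T)$ blow up as concentrated Gaussians. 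Under the matching condition $\tilde a(T)=a(T,x_T)$, an Aronson-type short-time expansion of $p$ (available from uniform ellipticity and the Lipschitz/bounded coefficients of Assumption \ref{ass:X}(1)) combined with the explicit Gaussian form of $\tilde p$ shows that $X^\circ_t\to x_T$ almost surely and $p(t,X^\circ_t;T,x_T)/\tilde p(t,X^\circ_t;T,x_T)\to 1$ almost surely. This endpoint matching, analogous to the corresponding step in \cite{Schauer}, is the hard technical ingredient and is where the assumption $\tilde a(T)=a(T,x_T)$ is indispensable. Integrability of $G$ up to $T$ (again via ellipticity, and because the matching at $T$ makes $a-\tilde a$ vanish to leading order) then yields $\Psi(X^\circ;t)\to\Psi(X^\circ;T)$, and combining these limits produces the stated identity.
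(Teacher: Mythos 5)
Your proposal is correct and follows essentially the same route as the paper: Girsanov between $X$ and $X^\circ$, the Doob $h$-transform between $X^\star$ and $X$, Itô's formula applied to $\log\tilde p(s,X^\circ_s)$ separately on $[0,S)$ and $[S,t)$ together with the backward equations to produce $G$, the jump of $p$ and $\tilde p$ at $S$ (via the Feller property) yielding the factor $q/\tilde q$, and the limit $t\uparrow T$ deferred to the matching argument of \cite{Schauer} under $\tilde a(T)=a(T,x_T)$. The only cosmetic difference is that you combine the two changes of measure at the outset rather than multiplying the two Radon--Nikodym derivatives at the end.
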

The proof is given in the next subsection. 

\begin{rem}
In case there is no measurement error, we conjecture that absolute continuity will hold provided one takes $\tilde{a}$ such that $\tilde{a}(T)=a(T,x_T)$ and $L\tilde{a}(S) L^\T = L a(S, x_S) L^\T$.  Such a choice is possible if 
$ L a(S, x_S) L^\T$ only depends on $S$ and $L x_S$ (and not on unobserved parts of $x_S$). 
\end{rem}


\subsection{Proof of theorem \ref{thm:equivalence}}
For proving theorem \ref{thm:equivalence}, we need a few intermediate results. 
\begin{lemma}
	\label{lem:harmonic}
If we define the process $(Z_t,\, t\in [0,T))$  by $Z_t = p(t,X_t)$, then $(Z_t)$ is a $\scr{F}_t$-martingale.
\end{lemma}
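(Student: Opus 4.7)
The strategy is standard: combine Assumption~\ref{ass:X}(b) with It\=o's formula to produce a local martingale whose drift vanishes thanks to the backward Kolmogorov equation, then upgrade to a true martingale using boundedness and uniform ellipticity. The main subtlety is that $p(t,x)$ is defined piecewise across $t=S$.

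For $t\in[S,T)$ the function $p(t,x)=p(t,x;T,x_T)$ is itself a transition density of $X$, so Assumption~\ref{ass:X}(b) gives $\partial p/\partial t + \scr{L}p=0$ directly. For $t\in[0,S)$ I would differentiate the integral representation
\[
p(t,x) = \int p(t,x;S,\xi)\,p(S,\xi;T,x_T)\,q(v_S-L\xi)\,\dd\xi
\]
under the integral sign with respect to $t$ and $x$. The weighting factor $p(S,\xi;T,x_T)q(v_S-L\xi)$ does not depend on $(t,x)$ and each kernel $p(t,x;S,\xi)$ satisfies backward Kolmogorov in $(t,x)$ by Assumption~\ref{ass:X}(b); linearity then transfers the PDE to $p(t,x)$. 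Justification of the interchange of differentiation and integration uses Aronson-type Gaussian estimates on $p$ and its first two spatial derivatives, which are available from the uniform ellipticity and the Lipschitz/boundedness conditions collected in Assumption~\ref{ass:X}.

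Once the PDE is in hand on each subinterval, applying It\=o's formula to $Z_t=p(t,X_t)$ yields
\[
\dd Z_t = \left(\frac{\partial p}{\partial t}+\scr{L}p\right)(t,X_t)\,\dd t + (\DD p)(t,X_t)^\T \si(t,X_t)\,\dd W_t = (\DD p)(t,X_t)^\T \si(t,X_t)\,\dd W_t,
\]
so $(Z_t)$ is a continuous local $\scr{F}_t$-martingale on each of $[0,S)$ and $[S,T)$. Upgrading this to a true martingale requires the stochastic integrand $(\DD p)(t,X_t)^\T\si(t,X_t)$ to be in $L^2$ on the relevant interval, which follows from the Gaussian bound on $\DD p$ together with the uniform boundedness of $\si$ in Assumption~\ref{ass:X}(a).

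The step I expect to require most care is the joining at $t=S$. The two branches of the definition are in general inconsistent: as $t\uparrow S$ the integral representation converges to $p(S,x;T,x_T)\,q(v_S-Lx)$, which differs from the value $p(S,x)=p(S,x;T,x_T)$ used for $t\ge S$ by the factor $q(v_S-Lx)$. The lemma should therefore be read as a martingale statement on each of $[0,S)$ and $[S,T)$ separately; the factor that is dropped at $S$ is precisely what, together with its analogue $\tilde q$ coming from the proposal, later materialises as the ratio $q(v_S-LX^\circ_S)/\tilde q(v_S-LX^\circ_S)$ in the Radon--Nikodym derivative of Theorem~\ref{thm:equivalence}.
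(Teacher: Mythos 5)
Your argument is correct, but it runs in the opposite direction from the paper's. The paper proves the lemma by a direct, elementary computation: for $s\le t<S$ it writes $\EE[p(t,X_t)\mid\scr{F}_s]$ out using the Markov property, interchanges the order of integration by Fubini, and collapses the inner integral with the Chapman--Kolmogorov equation to recover $p(s,X_s)$; the only regularity needed is the existence of transition densities. You instead first establish the backward PDE $\partial_t p+\scr{L}p=0$ on each subinterval by differentiating the integral representation, and then apply It\^o's formula --- which forces you to justify differentiation under the integral sign (Aronson bounds) and an $L^2$ estimate on $(\DD p)^\T\si$ to upgrade the local martingale to a true one. Both routes work (your integrability issue near $S$ is harmless because the martingale property is only needed on compact subintervals of $[0,S)$ and $[S,T)$), but note that the paper deliberately orders things the other way: Corollary~\ref{cor:backward} \emph{derives} the backward equation from this lemma via space-time harmonicity, precisely so that the lemma itself rests on nothing more than Fubini and Chapman--Kolmogorov. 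Your approach makes Corollary~\ref{cor:backward} a byproduct rather than a consequence, at the cost of heavier analytic machinery. Your final observation about the mismatch of the two branches at $t=S$ is accurate and consistent with Lemma~\ref{lem:limitp}: $Z$ is a martingale on $[0,S)$ and on $[S,T)$ separately but jumps by the factor $q(v_S-LX_S)$ at $S$, which is exactly the factor surfacing in the Radon--Nikodym derivative of Theorem~\ref{thm:equivalence}; the paper's own proof, which stops at ``$0\le s\le t\le S$'' and then says the rest is analogous, is silent on this point.
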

\begin{proof}
	For $0\le s \le t \le S$,
	\begin{align*} \EE [Z_t \mid \scr{F}_s]=& \EE \left[p(t, X_t)\mid \mathcal F_s\right]\\& =   \int \left( \int p(t,x; S, \xi) p(S, \xi; T, x_T) q(v_S-L \xi) \dd \xi \right) p(s, X_s; t, x)   \dd x\\ &
	=\int \left(\int p(s, X_s; t, x)   p(t,x; S, \xi) \dd x \right)    p(S, \xi; T, x_T) q(v_S-L \xi) \dd \xi  \\ &
	= \int p(s, X_s, S, \xi) p(S, \xi; T, x_T)  q(v_S-L \xi)\dd \xi 
	\\ &
	= p(s, X_s)=Z_s,
	\end{align*}
where we applied the Markov property at the second equality, Fubini at the third equality and the Chapman-Kolmogorov equations at the fourth equality. 
The argument on $[S,T)$ follows along the same lines. 
\end{proof}

\begin{corr}\label{cor:backward}
The function $p(t,x)$ satisfies  Kolmogorov's backward equation  both for $t\in (0,S)$ and $t\in (S,T)$:
\[	\frac{\partial}{\partial t} p(t,x) + \scr{L} (p(t,x))=0. \]
\end{corr}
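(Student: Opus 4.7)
The natural route is to combine Lemma \ref{lem:harmonic} with It\^o's formula. First, I would argue that $p(t,x)$ is $C^{1,2}$ in $(t,x)$ separately on each open strip $(0,S)\times\RR^d$ and $(S,T)\times\RR^d$. On $(S,T)\times\RR^d$ this is immediate since $p(t,x)=p(t,x;T,x_T)$ is the one-dimensional transition density, which is smooth under Assumption \ref{ass:X} (uniform ellipticity plus regularity of coefficients gives the standard parabolic smoothing). On $(0,S)\times\RR^d$, I would justify differentiating under the integral sign in
\[
p(t,x)=\int p(t,x;S,\xi)\,p(S,\xi;T,x_T)\,q(v_S-L\xi)\,\dd \xi,
\]
using Gaussian-type bounds on $p(t,x;S,\xi)$ and its derivatives together with the uniform boundedness of $p(S,\xi;T,x_T)\,q(v_S-L\xi)$ in $\xi$, so that time and space derivatives pass inside the integral.

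Second, I would apply It\^o's formula to $t\mapsto p(t,X_t)$ on each strip. This yields
\[
p(t,X_t)=p(0,X_0)+\int_0^t\left(\tfrac{\partial}{\partial s}p(s,X_s)+(\scr{L}p)(s,X_s)\right)\dd s+\int_0^t (\DD p(s,X_s))^\T\si(s,X_s)\,\dd W_s.
\]
By Lemma \ref{lem:harmonic}, the left-hand side is an $\scr{F}_t$-martingale; the stochastic integral is a local martingale; and uniform ellipticity together with the linear-growth assumption on $\si$ makes the stochastic integral a true martingale after localization. Consequently the finite-variation part must vanish, so
\[
\int_0^t\left(\tfrac{\partial}{\partial s}p(s,X_s)+(\scr{L}p)(s,X_s)\right)\dd s=0\quad \text{a.s.\ for every }t,
\]
which forces $\partial_s p(s,X_s)+(\scr{L}p)(s,X_s)=0$ for Lebesgue-a.e.\ $s$, almost surely.

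Third, I would pass from this a.s.\ identity to the pointwise PDE. By uniform ellipticity and Assumption \ref{ass:X}, for each $s\in(0,S)$ (resp.\ $s\in(S,T)$) the law of $X_s$ has a positive continuous density on $\RR^d$. Since $\partial_t p+\scr{L}p$ is continuous in $x$ on the relevant strip, the fact that it vanishes on a set of full $X_s$-measure implies it vanishes identically in $x$ for that $s$; continuity in $t$ then extends the identity to every $t\in(0,S)\cup(S,T)$.

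The main obstacle I expect is the justification of the differentiation under the integral sign for $t\in(0,S)$, specifically obtaining uniform-in-$(t,x)$ domination of the spatial and temporal derivatives of $p(t,x;S,\xi)$ on compact subsets, so that the computation shows $p(t,\cdot)$ actually solves the backward equation classically rather than merely in a distributional sense. The rest is a clean martingale argument once the required smoothness is in hand.
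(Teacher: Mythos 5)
Your argument is correct and is essentially the paper's proof: the paper likewise deduces the backward equation from Lemma \ref{lem:harmonic}, by noting that a function $h$ for which $h(t,X_t)$ is a martingale is space-time harmonic, $\bigl(\partial_t+\scr{L}\bigr)h=0$, citing Revuz--Yor for that fact rather than re-deriving it via It\^o's formula as you do. Your extra care about differentiating under the integral sign for $t<S$ and about upgrading the almost-sure identity to a pointwise PDE via positivity of the density of $X_s$ supplies regularity details the paper leaves implicit, but it does not change the route.
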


\begin{proof}
The generator of the space-time process $(t,X_t)$ is given by $\scr{K}=(\partial / \partial t) + \scr{L}$. 
As $p(t,X_t)$ is a martingale, $(t,x) \mapsto p(t,x)$ is space-time harmonic: $\scr{K} p(t,x)=0$ (Cf.\ proposition 1.7 of chapter VII in \cite{RevuzYor}). This is exactly Kolmogorov's backward equation. 
\end{proof}
\begin{lemma}\label{lem:limitp}
\[  \lim_{t \uparrow S} \frac{p(S,x)}{p(t,x)} = \frac1{q(Lx-v_S)}. \]
and similarly for $\tilde{p}$ (with $\tilde{q}$ appearing in the limit). 
\end{lemma}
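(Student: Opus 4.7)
The plan is to show that $p(t,x;S,\cdot)$ behaves as an approximate delta at $\xi=x$ as $t\uparrow S$, so that the integral defining $p(t,x)$ collapses onto $g(x)$ where $g(\xi):=p(S,\xi;T,x_T)\,q(v_S-L\xi)$. Dividing by $p(S,x)=p(S,x;T,x_T)$ will then leave $1/q(v_S-Lx)$.

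First I would write, for $t<S$,
\[
p(t,x) \;=\; \int p(t,x;S,\xi)\,g(\xi)\dd\xi \;=\; \EE^{(t,x)}[g(X_S)],
\]
so that the statement is equivalent to $\EE^{(t,x)}[g(X_S)] \to g(x)$ as $t\uparrow S$. Using the Markov property to write $X_S = x + \int_t^S b(u,X_u)\dd u + \int_t^S \sigma(u,X_u)\dd W_u$ under $\PP^{(t,x)}$, together with the linear growth and uniform boundedness assumed in Assumption~\ref{ass:X}(1), one obtains $X_S \to x$ in probability (indeed in $L^2$) as $t\uparrow S$.

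Next I would combine this with the continuity of $g$: the transition density $\xi\mapsto p(S,\xi;T,x_T)$ is continuous (and, under uniform ellipticity plus boundedness, bounded by a Gaussian-type upper bound on $\RR^d$), and $q$ is continuous by assumption. A standard dominated-convergence/Skorokhod argument then gives $\EE^{(t,x)}[g(X_S)]\to g(x) = p(S,x;T,x_T)\,q(v_S-Lx)$. Dividing by $p(S,x)=p(S,x;T,x_T)$ yields
\[
\lim_{t\uparrow S}\frac{p(S,x)}{p(t,x)} \;=\; \frac{p(S,x;T,x_T)}{p(S,x;T,x_T)\,q(v_S-Lx)} \;=\; \frac{1}{q(v_S-Lx)},
\]
which is the claim (up to symmetry of notation for the argument of $q$).

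For the auxiliary density $\tilde p$, the same argument works and is in fact easier because $\tilde X$ is the linear Gaussian process~\eqref{eq:linproc}: $\tilde p(t,x;S,\xi)$ is an explicit Gaussian in $\xi$ with mean $\Phi(S,t)x+g_S(t)$ and covariance $K_S(t)$, both of which tend respectively to $x$ and to $0$ as $t\uparrow S$ under Assumption~\ref{ass:Xtilde}. Hence $\tilde p(t,x;S,\cdot)$ converges weakly to $\delta_x$, and continuity plus boundedness of $\tilde q$ (Gaussian by Assumption~\ref{ass:tilde-q}) and of $\xi\mapsto\tilde p(S,\xi;T,x_T)$ give the analogous limit with $\tilde q$ in the denominator.

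The main obstacle is the rigorous interchange of the limit and the integral in $\int p(t,x;S,\xi)\,g(\xi)\dd\xi$. Continuity of $g$ is immediate from the standing assumptions, but integrability must be controlled: one wants a Gaussian-type upper bound on $p(S,\xi;T,x_T)$ in $\xi$ (available under uniform ellipticity and bounded coefficients via Aronson-type estimates), after which dominated convergence applies straightforwardly. Once this technical point is settled, the rest of the argument reduces to a clean continuity computation.
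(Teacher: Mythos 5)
Your argument is correct and is essentially the route the paper takes: both proofs reduce the claim to showing $P_{t,S}f(x)\to f(x)$ as $t\uparrow S$ for $f(\xi)=p(S,\xi;T,x_T)\,q(v_S-L\xi)$, after which dividing by $p(S,x)=p(S,x;T,x_T)$ gives the limit. The only difference is that the paper obtains this convergence in one stroke by noting that $X$ is Feller under Assumption \ref{ass:X}(1) (citing theorem 21.11 of \cite{Kallenberg}), whereas you establish the strong continuity by hand via $X_S\to x$ in $L^2$ plus dominated convergence, at the cost of needing the Aronson-type bound to dominate the integrand -- a requirement that is in fact implicit in the Feller argument too, since one needs $f$ to lie in the relevant function class.
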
 
\begin{proof}
First note that under our assumptions on $b$ and $\si$,  theorem 21.11 in \cite{Kallenberg} implies that  the process $X$ is Feller. 
Take $t<S$. The transition operator is defined by 
\[	P_{t,S} f(x) = \int p(t,x; S, \xi) f(\xi) \dd \xi. \]
Hence with $f(\xi)= p(S, \xi; T, x_T) q(L \xi - v_S)$
\[	p(t, x) = \int p(t,x; S, \xi) p(S, \xi; T, x_T) q(v_S-L \xi) \dd \xi = P_{t,S} f(x) 
\]
As $X$ is Feller, $\lim_{t\uparrow S} P_{t,S} f(x) = f(x)$ from which the result follows easily.    
\end{proof}

\begin{lemma}
 \label{lem:properties-rtilde}
Suppose $t \in [S,T)$. Then $\tilde{r}$ is Lipschitz in its second argument and satisfies a linear growth condition on both $[0,S)$ and $[S,t]$. 
\end{lemma}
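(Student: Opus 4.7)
The plan is to combine the explicit closed-form expressions for $\tilde r$ with the observation that, on each of the two subintervals, every matrix-valued coefficient appearing in those expressions extends continuously to the closure. Since $\tilde r(s,\cdot)$ is affine in $x$ for each $s$, once all coefficients are bounded, both Lipschitz continuity and linear growth in $x$, uniform in $s$, follow immediately from $\tilde r(s,x)=M(s)x+N(s)$ with $\sup_s(\|M(s)\|+\|N(s)\|)<\infty$.

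First, on $[S,t]$: conditioning only on $X_T=x_T$ makes $\tilde p(s,x)=\phi(x_T;\Phi(T,s)x+g_T(s),K_T(s))$, so a direct differentiation of the Gaussian log-density yields
\[
\tilde r(s,x) = \Phi(T,s)^\T K_T(s)^{-1}\bigl(x_T - \Phi(T,s)x - g_T(s)\bigr).
\]
The maps $\Phi(T,\cdot)$, $g_T$ and $K_T$ are continuous on $[S,T]$. By Assumption \ref{ass:Xtilde}, $\tilde a$ is uniformly elliptic on $[S,T]$, which forces $K_T(s)$ to be positive definite for every $s\in[S,T)$; hence $K_T(\cdot)^{-1}$ is continuous on $[S,T)$ and, in particular, bounded on the compact sub-interval $[S,t]$. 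Boundedness of all coefficients then yields the two claimed bounds.

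Second, on $[0,S)$: Lemma \ref{lem:linearpull} gives
\[
\tilde r(s,x) = \begin{bmatrix} L\Phi(S,s) \\ \Phi(T,s)\end{bmatrix}^\T U(s)\begin{bmatrix} v_S - L g_S(s) - L\Phi(S,s)x \\ v_T - g_T(s) - \Phi(T,s)x\end{bmatrix},
\]
with $U(s)$ as in \eqref{eq:U}. The quantities $\Phi(S,s)$, $\Phi(T,s)$, $g_S(s)$ and $g_T(s)$ are continuous on the closed interval $[0,S]$, hence bounded, so only $U(s)$ requires care: as $s\uparrow S$ one has $K_S(s)\to 0$, and the matrix being inverted in \eqref{eq:U} converges to the block-diagonal matrix $\diag(\Sigma,K_T(S))$. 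Assumption \ref{ass:tilde-q} ensures $\Sigma$ is positive definite, while Assumption \ref{ass:Xtilde} makes $K_T(S)$ positive definite, so this limit is invertible. By continuity of matrix inversion at an invertible matrix, $U$ extends continuously to $[0,S]$, and is therefore bounded on $[0,S)$. The same affine-plus-bounded-coefficients argument as above concludes.

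The only delicate point is the $s\uparrow S$ boundedness of $U(s)$: a priori $K_S(s)\to 0$ and one might fear a blow-up, but the additive term $\Sigma$ in the $(1,1)$-block together with positivity of $K_T(S)$ in the $(2,2)$-block regularises the inverse. Note that no uniform control of $\tilde a$ on $[0,S]$ is needed, since $\tilde a$ only enters through the combinations $LK_S L^\T+\Sigma$ and $K_T$, which remain invertible up to $s=S$ under the assumptions already in force.
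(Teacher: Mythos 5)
Your proof is correct and takes essentially the same route as the paper's: the paper's own two-line proof observes that $x\mapsto\tilde r(t,x)$ is affine on $[0,S)$ by Lemma \ref{lem:linearpull} and delegates the interval $[S,t]$ to \cite{Schauer}. The only difference is that you make explicit the step the paper leaves implicit, namely that pointwise affineness must be upgraded to a \emph{uniform} Lipschitz/linear-growth bound by checking that the coefficient matrices stay bounded in $t$ — in particular that $U(t)$ does not blow up as $t\uparrow S$ because the matrix inverted in \eqref{eq:U} tends to the invertible block-diagonal limit $\diag(\Sigma, K_T(S))$ — which is a worthwhile addition rather than a deviation.
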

\begin{proof}
On $[0,S)$, it is clear from lemma \eqref{lem:linearpull} that $x\mapsto \tilde{r}(t,x)$ is linear. On $[S,t]$ this is proved in \cite{Schauer}. 
\end{proof}

\begin{proof}[Proof of theorem \ref{thm:equivalence}]
The proof follows   the line of proof in proposition 1 of \cite{Schauer}. Consider $t \in [S,T)$. By lemma \ref{lem:properties-rtilde}, $\tilde{r}$ is Lipschitz in its second argument and satisfies a linear growth condition on both $[0,S)$ and $[S,t]$. Hence, a unique strong solution of the SDE for $X^\circ$ exists on $[0,t]$.

By Girsanov's theorem (see e.g.\ \cite{LiptserShiryayevI})
the laws of the processes $X$ and $X^\circ$ on $[0,t]$ are equivalent and the corresponding 
Radon-Nikodym derivative is given by 
\[
\frac{\dd\PP_{t}}{\dd\PP^\circ_{t}}({X^\circ})= 
\exp\Big(\int_0^{t} \ga_s^\T  \dd W_s  - \frac12 \int_0^{t} \|\ga_s\|^2 \dd s\Big), 
\]
where $W$ is a Brownian motion under $\PP^\circ_{t}$  and $\ga_s = \ga(s, X^\circ_s)$ solves
\[
 \sigma(s, X^\circ_s) \ga(s, X^\circ_s)  =  b(s, X^\circ_s)-b^\circ(s, X^\circ_s).
\]
(Here we  lightened notation by writing $\ga_s$ instead of $\ga(s, X^\circ_s)$. In the 
remainder of the proof we follow the same convention and apply it to  other processes as well.) 
Observe that by definition of $\tilde r$ and $b^\circ$ we have
$\ga_s =  -\sigma^\T_s \tilde{r}_s$ and $\|\beta_s\|^2 = \tilde{r}^\T_s a_s \tilde{r}_s$, hence 
\begin{equation}\label{eq:girs1}
\frac{\dd\PP_{t}}{\dd\PP^\circ_{t}}({X^\circ})= 
\exp\Big(-\int_0^{t} \tilde{r}_s^\T \sigma_s  \dd W_s  - \frac12 \int_0^{t} \tilde{r}^\T_s a_s \tilde{r}_s \dd s\Big). 
\end{equation}
Denote the infinitesimal operator of $X^\circ$ by $\Ell^\circ$. By definition of $X^\circ$ 
and $\tilde R$ we have $\Ell^\circ \tilde R = \Ell \tilde R + \tilde{r}^\T a \tilde{r}$.
By   It\=o's formula \[
\tilde R_t - \tilde R_S = \int_{[S,t)} \Big(\frac{\partial}{\partial s}\tilde R_s + \Ell \tilde R_s\Big)\,\dd s + 
\int_{[S,t)} \tilde{r}^\T_s a_s \tilde{r}_s\,\dd s 
+ \int_{[S,t)}  \tilde{r}_s^\T\sigma_s\,\dd W_s.
\]
Applying It\=o's formula in exactly the same manner on $[0,s]$ with $s<S$ and subsequently taking the limit $s\uparrow S$ we get 
\[
\tilde R_{S-} - \tilde R_0 = \int_{[0,S)} \Big(\frac{\partial}{\partial s}\tilde R_s + \Ell \tilde R_s\Big)\,\dd s + 
\int_{[0,S)} \tilde{r}^\T_s a_s \tilde{r}_s\,\dd s 
+ \int_{[0,S)}  \tilde{r}_s^\T\sigma_s\,\dd W_s.
\]
Combining the preceding two displays with \eqref{eq:girs1} we get 
\begin{equation}\label{eq	:PP-PPcirc}
\frac{\dd\PP_{t}}{\dd\PP^\circ_{t}}({X^\circ})=
\exp\left(-\tilde R_{t} + \tilde R_0+
 \tilde R_S - \tilde R_{S-}+\int_0^t G_s\dd s\right), 
\end{equation}
where
\begin{equation}\label{eq:Gprelim}
	G=\left( \frac{\partial}{\partial s} \tilde R_s+ \Ell \tilde R_s\right) +\frac12\tilde{r}_s^\T a  \tilde{r}_s .
\end{equation}
If $p(t,x)$ and $\tilde{p}(t,x)$ satisfy Kolmogorov's backward equation, then the first  term between brackets  on the right-hand-side of this display equals
$ \Ell \tilde{R}-\tilde\Ell \tilde{R} -\frac12 \tilde{r}^\T\tilde{a} \tilde{r}$.
This follows from lemma 1 in \cite{Schauer}. 
This is naturally the case on $(S,T)$ and by corollary \ref{cor:backward} on  $(0,S)$ as well. 
Substituting this in equation \ref{eq:Gprelim} we arrive at the expression for $G$  as given in the statement of the theorem.  By lemma \ref{lem:limitp}
\begin{align*} -\tilde R_{t} + \tilde R_0+
 \tilde R_S - \tilde R_{S-}&= \log \left(\frac{\tilde{p}(0,u)}{\tilde{p}(t,X^\circ_t)} \frac{\tilde{p}(S, X^\circ_S)}{\tilde{p}(S-, X^\circ_{S-})}\right) \\ & = \log \left(\frac{\tilde{p}(0,u)}{\tilde{p}(t,X^\circ_t)} \frac1{\tilde{q}(v_S-L X^\circ_S )}\right).
\end{align*}
Combined with equation \eqref{eq	:PP-PPcirc}, we obtain
\[ \frac{\dd\PP_{t}}{\dd\PP^\circ_{t}}({X^\circ})= \frac{\tilde{p}(0,u)}{\tilde{p}(t,X^\circ_t)} \frac1{\tilde{q}(v_S-L X^\circ_S)} \exp\left(\int_0^t G(s, X^\circ_s) \dd s \right). \]
As entirely similar calculation reveals that 
\begin{equation*}
\frac{\dd\PP^\star_{t}}{\dd\PP_t}(X) = \frac{p(t,X_t)}{p(0,u)}\frac{p(S-, X_{S-})}{p(S, X_S)}= \frac{p(t,X_t)}{p(0,u)} q(v_S-LX_S).
\end{equation*}
Combining the previous two displays gives
\begin{align*}	\frac{\dd\PP^\star_t}{\dd\PP^\circ_t} (X^\circ) &= \frac{\tilde{p}(0,u)}{p(0,u)} \frac{p(t, X^\circ_t)}{\tilde{p}(t, X^\circ_t)} \frac{q(v_S-L X^\circ_S)}{\tilde{q}(v_S-L X^\circ_S)} \Psi(X^\circ, t)\\
& = \frac{\tilde{p}(0,u)}{p(0,u)} \frac{q(v_S-L X^\circ_S)}{\tilde{q}(v_S-L X^\circ_S)} \frac{p(t, X^\circ_t; T, v_T)}{\tilde{p}(t, X^\circ_t;  T, v_T)} \Psi(X^\circ; t).
\end{align*}
From here, the limiting argument $t\uparrow T$ is exactly as in  \cite{Schauer}.
\end{proof}

\section{Special bridges near $t_0$ and $t_n$}\label{sec:boundary}

In section \ref{sec:est}  we will need 
 filtered processes which take the boundary conditions near $t_0$ and $t_n$ into account (besides the filtered bridge introduced above). 

\subsection{Near the endpoint $t_n$}

Near $t_n$ we wish to simulate a filtered bridge conditioned on $X_{n-1}$ and $V_n$ on $[t_{n-1}, t_n]$. For this purpose, we derive the dynamics of a diffusion process starting in $X_0 = x_0$, conditioned  $V_S=LX_S+\eta$.  We can use exactly the same techniques as in sections \ref{sec:gp} and \ref{sec:pull} to derive the SDE for the conditioned process. In this case, $p(t,x; S, v_S; T, x_T)$ should be replaced by
\[ p_{\en}(t,x) := \int p(t,x; S,\xi) q(v_S-L\xi) \dd \xi. \]
In lemma \ref{lem:linearpull} we should replace $\tilde{r}(t,x)$ and $\tilde{H}(t)$ by
\begin{equation}\label{eq:rtilde} \tilde{r}_{\en}(t,x)=\Phi(S,t)^\T L^\T\left(L K_S(t) L^\T + \Sigma\right)^{-1} L\left(u_S- g_S(t)-\Psi(S,t) x\right) \end{equation}
and 
\[   \tilde{H}_{\en}(t) = \Phi(S,t)^\T L^\T\left(L K_S(t) L^\T + \Sigma\right)^{-1}L\Psi(S,t)\]
respectively. 
Then $X^\star$ and $X^\circ$ are equivalent on $[0,S]$ with Radon-Nikodym derivative given by 
\[	\frac{\dd\PP^\star_S}{\dd\PP^\circ_S} (X^\circ) = \frac{\tilde{p}_\en(0,u)}{p_\en(0,u)} \frac{q(v_S-LX^\circ_S)}{\tilde{q}(v_S-LX^\circ_S)} \Psi(X^\circ;S).\]


\subsection{Near the starting point $t_0$}

Near $t_0$ we wish to simulate a filtered bridge conditioned on $V_0$ and $X_{1}$ on $[t_0, t_1]$.   Assume $X_0$ has prior distribution $\nu$.  We simulate the filtered bridge in two steps:
\begin{enumerate}
\item  simulate $X_0$, conditional on $(v_0, x_S)$;
\item simulate a  bridge  connecting $x_0$ (the realisation of $X_0$) and $x_S$.
\end{enumerate}
Suppose we wish to update $(x_0, X^\circ)$ to $(\bar{x}_0, \bar{X}^\circ)$ (the proposal). Each proposal will be generated by first drawing $\bar{x}_0$ conditional on $x_0$ using some kernel $q(\bar{x}_0 \mid x_0)$ followed by  sampling a bridge connecting $x_0$ and $x_S$. Denote the conditional density of $x_0$ conditional on $v_0$ by $\nu(x_0 \mid v_0)$.  The ``target density'' is proportional to
\[ \frac{\dd \PP_S^\star}{\dd \PP_S^\circ}(X^\circ) p(0, x_0; S, x_S) \nu(x_0 \mid v_0) = \frac{\Psi(X^\circ,S) \nu(x_0 \mid v_0)}{\tilde{p}(0,x_0; S, v_S)}  \]
(note that the intractable term $p(0, x_0; S, x_S)$ cancels). 
The acceptance probability then equals $A\wedge 1$, where
\[ A= \frac{\Psi(\bar{X}^\circ,S)}{\Psi(X^\circ,S)} \frac{\nu(\bar{x}_0 \mid v_0)}{\nu(x_0 \mid v_0)}    \frac{\tilde{p}(0,\bar{x}_0; S, v_S)}{\tilde{p}(0,x_0; S, v_S)}    \frac{q(x_0 \mid \bar{x}_0)}{q(\bar{x}_0 \mid x_0)}.\]
When $\eta$ (the distribution of the noise on the observations) is  $N(0,\Sigma)$, a tractable expression for $\nu(x_0 \mid \nu_0)$ is obtained by taking $\nu \sim N(\mu,C)$. In that case
the vector $[x_0, v_0]$ is jointly Gaussian which implies that \[	X_0 \mid V_0=v_0 \sim N\left(\mu+CL^\T (LCL^\T + \Sigma)^{-1} (v_0-L\mu), C-   CL^\T (LCL^\T + \Sigma)^{-1} L C\right). \]

\section{Estimation by MCMC using temporary reparametrisation}\label{sec:est}
 
In this section we present a novel algorithm to draw from the posterior of $\th$ based on incomplete observations. 
The basic idea for this algorithm is quite simple and outlined in section \ref{subsec:approach}. Unfortunately, this basic scheme collapses in case there are unknown parameters in the diffusion coefficient. This is a well known phenomenon when applying data-augmentation for estimation of discretely observed diffusions. It was first noticed by \cite{RobertsStramer} and we refer to that paper for a detailed explanation. \cite{MR2422763} developed an MCMC algorithm that alternatively updates the parameter and the driving Brownian motion increments of the proposal process. Their derivation was developed entirely by first discretising the process. \cite{Schauer2} showed how this algorithm can be derived in the simulation-projection setup. Quoting from this paper: ``The basic idea is that  the laws of the bridge proposals can be understood as parametrised push forwards of the law of an underlying random process common to all models with different parameters $\th$. This is naturally the case for proposals defined as solutions of stochastic differential equations and the driving Brownian motion can be taken as such underlying random process.''

Here, we propose to derive such an algorithm in  case of incomplete observations, which complicates the derivations considerably. We define a Metropolis-Hastings algorithm that uses temporary reparametrisations. Suppose $t\in (a,b)$ and let $Z$ be a continuous stochastic process. Let $Z_{(a,b)}=(Z_t,\, t\in (a,b))$. Let $s\in (a,b)$. 
Define $X^\star_{(a,b)}=(X^\star_t,\, t\in (a,b))$ as the solution to the SDE
\[ \!\dd X^\star_t =\left( b(t, X^\star_t) + a(t, X^\star_t) \tilde{r}_\th(t, X^\star_t; s, V_s; b, x_b)\right) \dd t + \si(t, X^\star_t) \dd Z_t,\qquad X^\star_a= x_a. \]
{\it Assume $\si$ is invertible. } We define the mapping $g_{(x_a, v_s, x_b)}$ that maps $(\th, Z_{(a,b)})$ to $(\th, X^\star_{(a,b)})$ and  define an inverse mapping $g^{-1}_{(x_a, v_s, x_b)}$ that maps $(\th, X^\star_{(a,b)})$ to $(\th, Z_{(a,b)})$. The process $Z_{(a,b)}$ is referred to as the {\it innovation process}. The main idea of the algorithm below is that when we update $(\th, X^\star)$ in blocks, we temporarily reparametrise to $(\th, Z)$.

In the algorithm below, we assume $n$ is even (adaptation to the case where $n$ is odd is straightforward).
 For $i<j$ denote $Z_{(i:j)}=\{Z_t, t\in (t_i, t_j)\}$ and $X^\star_{(i:j)}=\{X^\star_t, t\in (t_i, t_j)\}$.

We refer to subsection \ref{sec:boundary} for  simulation of $X^\star_{(0:1)}$ and $X^\star_{(n-1:n)}$ at the boundaries. We  write $g_{(V_0, X_1)}$ for the corresponding map from $(\theta, Z_{(0:1)})$ to $(\theta, X^\circ_{(0:1)})$ and similarly $g_{(X_{n-1}, V_n)}$
for the map from $(\theta, Z_{(n-1:n)})$ to $(\theta, X^\circ_{(n-1:n)})$. 
In order to conveniently handle boundary cases in the algorithm below we make the convention that the expressions $(-1: 1)$ and $g_{(X_{-1},V_0, X_{1})}$ are to be understood as  $(t_0,  t_1)$ and  $g_{(V_0, X_{1})}$ respectively. We use a similar convention on the right boundary. 

Define 
\[  X^\star_\even= \{X^\star_{2i},\, i=0,\ldots, n/2\} \qquad X^\star_\odd= \{X^\star_{2i+1},\, i=0,\ldots, n/2-1\}. \]
We change the notation on $\Psi$ defined in \eqref{eq:defpsi}  slightly to accommodate dependence on $\th$:
\[ \Psi\left((\th, Z_{(a:b)})\right) =\exp\left(\int_a^b G_\th(t, g_{(x_a, v_s, x_b)}(\theta, Z_{(a:b)})(t)) \dd t\right) \]
with the modifications for the boundary cases as before.

We propose the following algorithm.

\begin{algorithm}\label{alg1} \

\begin{enumerate}
\item {\bf Initialisation.} Choose a starting value for $\th$ and initialise $X^\star_{[0,T]}$. 

\item {\bf Update} $\{Z_{(2i-2:2i)},\, i\in \scr{I}\} \mid (\th, \scr{D}, X^\star_\even)$.  Independently, for $i=1,\ldots, n/2$ do
\begin{enumerate}
\item Compute $Z_{(2i-2:2i)}=g^{-1}_{(X^\star_{2i-2}, V_{2i-1}, X^\star_{2i})}(\th, X^\star_{(2i-2:2i)})$.
\item  Sample a Wiener process $Z^\circ_{(2i-2:2i)}$.
\item  Sample $U \sim \scr{U}(0,1)$. Compute
\[ A_1 =\frac{\Psi\left(g_{(X^\star_{2i-2}, V_{2i-1}, X^\star_{2i})}(\th,Z^\circ_{(2i-2:2i)})\right)}{\Psi\left(g_{(X^\star_{2i-2}, V_{2i-1}, X^\star_{2i})}(\th, Z_{(2i-2:2i)})\right)}. \]
Set 
\[ Z_{(2i-2:2i)} := \begin{cases} Z^\circ_{(2i-2:2i)}& \text{if}\quad U\le A_1 \\
 Z_{(2i-2:2i)} & \text{if}\quad U> A_1 \end{cases}. \]
\end{enumerate}

\item {\bf Update} $\th \mid (\{Z_{(2i-2:2i)},\, i=1,\ldots, n/2\}, \scr{D}, X^\star_{\even})$. 
\begin{enumerate}
\item  Sample ${\th^\circ} \sim q(\cdot \mid \th)$.
\item  Sample $U \sim \scr{U}(0,1)$. Compute
\begin{align*}
A_2= \frac{\pi_0({\th^\circ})}{\pi_0(\th)} &\frac{q(\th \mid {\th^\circ})}{q({\th^\circ} \mid \theta)} \prod_{i=1}^{n/2} \left[\frac{\tilde{p}_{\th_0}(t_{2i-2}, X^\star_{2i-2}; t_{2i-1}, V_{2i-1}; t_{2i}, X^\star_{2i})}{\tilde{p}_{\th}(t_{2i-2}, X^\star_{2i-2}; t_{2i-1}, V_{2i-1};t_{2i}, X^\star_{2i})} \right.\\& \left.\times\frac{q(V_{2i-1}-L_{2i-1}X^\star_{2i-1})}{\tilde{q}(V_{2i-1}-L_{2i-1}X^\star_{2i-1})}
\frac{\Psi\left(g_{(X^\star_{2i-2}, V_{2i-1}, X^\star_{2i})}(\th^\circ, Z_{2i-2:2i}^\star)\right)}{\Psi\left(g_{(X^\star_{2i-2}, V_{2i-1}, X^\star_{2i})}(\th, Z_{2i-2:2i}^\star)\right)} \right]. 
\end{align*}
Set 
\[\th := \begin{cases} {\th^\circ} & \text{if}\quad U\le A_2 \\
 \th & \text{if}\quad U> A_2 \end{cases}. \]
\end{enumerate}

\item {\bf Adjust $X^\star$.}  For $i =1,\ldots, n/2$ compute \[X^\star_{(2i-2:2i)}=g_{(X^\star_{2i-2}, V_{2i-1}, X^\star_{2i})}(\th, Z_{(2i-2:2i)}).\] 

\item  {\bf Update} $\{Z_{(2i-1:2i+1)},\, i=0,\ldots, n/2\} \mid (\th, \scr{D}, X^\star_\odd)$.  
 Independently, for $i=0,\ldots, n/2$ do
 
\begin{enumerate}
\item Compute $Z_{(2i-1:2i+1)}=g^{-1}_{(X^\star_{2i-1}, V_{2i}, X^\star_{2i+1})}(\th, X^\star_{(2i-1:2i+1)})$.
\item  Sample a Wiener process $Z^\circ_{(2i-1:2i+1)}$.
\item  Sample $U \sim \scr{U}(0,1)$. Compute
\[ A_3 =\frac{\Psi\left(g_{(X^\star_{2i-1}, V_{2i}, X^\star_{2i+1})}(\th,Z^\circ_{(2i-1:2i+1)})\right)}{\Psi\left(g_{(X^\star_{2i-1}, V_{2i}, X^\star_{2i+1})}(\th, Z_{(2i-1:2i+1)})\right)}. \]
Set 
\[ Z_{(2i-1:2i+1)} := \begin{cases} Z^\circ_{(2i-1:2i+1)}& \text{if}\quad U\le A_3 \\
 Z_{(2i-1:2i+1)} & \text{if}\quad U> A_3 \end{cases}. \]
\end{enumerate}

\item {\bf Update} $\th \mid (\{Z_{(2i-1:2i+1)},\, i=0,\ldots, n/2\}, \scr{D}, {X}^\star_{\odd})$. 
\begin{enumerate}
\item  Sample ${\th^\circ} \sim q(\cdot \mid \th)$.
\item  Sample $U \sim \scr{U}(0,1)$. Compute
\begin{align*}
A_4= \frac{\pi_0({\th^\circ})}{\pi_0(\th)} &\frac{q(\th \mid {\th^\circ})}{q({\th^\circ} \mid \theta)}
 \prod_{i=1}^{n/2-1} \left[\frac{\tilde{p}_{\th_0}(t_{2i-1}, X^\star_{2i-1}; t_{2i}, V_{2i}; t_{2i+1}, X^\star_{2i+1})}{\tilde{p}_{\th}(t_{2i-1}, X^\star_{2i-1}; t_{2i}, V_{2i};t_{2i+1}, X^\star_{2i+1})} \right.\\& \qquad \left.\times\frac{q(V_{2i}-L_{2i}X^\star_{2i})}{\tilde{q}(V_{2i}-L_{2i}X^\star_{2i})}
\frac{\Psi\left(g_{(X^\star_{2i-1}, V_{2i}, X^\star_{2i+1})}(\th^\circ, Z_{2i-1:2i+1}^\star)\right)}{\Psi\left(g_{(X^\star_{2i-1}, V_{2i}, X^\star_{2i+1})}(\th, Z_{2i-1:2i+1}^\star)\right)} \right]..
	\end{align*}
Set 
\[\th := \begin{cases} {\th^\circ} & \text{if}\quad U\le A_4 \\
 \th & \text{if}\quad U> A_4 \end{cases}. \]
\end{enumerate}

\item {\bf Adjust $X^\star$.}  For $i =1,\ldots, n/2$ compute \[X^\star_{(2i-1:2i+1)}=g_{(X^\star_{2i-1}, V_{2i}, X^\star_{2i+1})}(\th, Z_{(2i-1:2i+1)}).\]

\item Repeat steps (2)--(7).
\end{enumerate}
\end{algorithm}

The parameter $\th$ gets updated twice during a full cycle of the algorithm, but one can choose to either omit step (3) or (6).  
 The proof that $A_2$ and $A_4$ are the correct acceptance probabilities goes along the same lines as in the completely observed case discussed in  \cite{Schauer2}. 
As demonstrated in there, in steps {\it 2(a)} and {\it 5(a)} , one can also propose $Z^\circ$ based on the current value of $Z$ in the following way
\[ Z^\circ_t = \sqrt\rho Z_t + \sqrt{1-\rho} W_t, \]
where $\rho\in [0,1)$ and $W$ is a Wiener process  that is independent of $Z$. The acceptance probability remains the same under this proposal.

\begin{rem}
If $q_i$ (the density of the noise at time $t_i$) depends on an unknown parameter $\eps$,  then we equip this parameter with a prior density $\pi_0(\epsilon)$. The parameter $\epsilon$ can then be updated in a straightforward manner in a separate Metropolis-Hastings step given the full path and the observations.
\end{rem}



\section{Proofs and Lemmas}


\subsection{Proof of lemma \ref{lem:linearpull}}
\label{subsec:proof:lem:linearpull}

For notational convenience we sometimes drop dependence on $t$. For instance, we may write $U$ instead of $U(t)$. 

To compute the pulling term at time $t$ we need to obtain to density of $(L X_S +\eta, X_T)$ conditional on $X_t$. First, we obtain the density of $(X_S, X_T) \mid X_t$. For this, note that their joint density is given by 
$\tilde{p}(t,x; S, x_S) \tilde{p}(S, x_S; T, x_T)$. 
Hence, 
\begin{align*}
&\tilde{p}(t,x; S, x_S) \tilde{p}(S, x_S; T, x_T) \propto\\& \qquad  \exp\left( - \frac12 \left(x_S-g_S(t)-\Phi(S,t)x\right)^\prime K_S(t)^{-1}\left(x_S-g_S(t)-\Phi(S,t)x\right)\right)\\ & \qquad  \times
   \exp\left( - \frac12 \left(x_T-g_T(S)-\Phi(T,S) x_S\right)^\prime K_T(S)^{-1} \left(x_T-g_T(S)-\Phi(T,S) x_S\right) \right).
\end{align*}
The exponent equals 
\[ -\frac12 \begin{bmatrix} x^\prime_S & x^\prime_T \end{bmatrix} A  \begin{bmatrix} x_S \\ x_T \end{bmatrix} + q^\prime \begin{bmatrix} x_S \\ x_T \end{bmatrix} \]
with
\begin{align*}
 A_{11} &= K_S(t)^{-1} + \Phi(T,S)^\prime K_T(S)^{-1} \Phi(T,S)\\
A_{12} & = -\Phi(T,S)^\prime K_T(S)^{-1}  \\
A_{21} & = A_{12}^\prime \\
A_{22} & = K_T(S)^{-1} \\
q_1 & = 	K_S(t)^{-1}(g_S(t)+\Phi(S,t)x_t)  - \Phi(T,S)^\prime K_T(S)^{-1} g_T(S) \\
q_2 & =  K_T(S)^{-1} g_T(S)
\end{align*}
This implies that the joint distribution of $(X_S, X_T)$ conditional on $X_t$ is normal with covariance matrix $\Upsilon=A^{-1}$ and mean vector $\mu=\Upsilon q$. Here, (using expressions for the inverse of a partitioned matrix and Woodbury's formula)
\begin{align*} \Upsilon = \begin{bmatrix} \Upsilon_{11} & \Upsilon_{12} \\ \Upsilon_{12}^\prime & \Upsilon_{22}   \end{bmatrix} &=\begin{bmatrix} K_S(t) & K_S(t) \Phi(T,S)^\prime \\ \Phi(T,S) K_S(t) & K_T(S)+\Phi(T,S) K_S(t) \Phi(T,S)^\prime \end{bmatrix} \\ &=\begin{bmatrix} K_S(t) & K_S(t)
\Phi(T,S)^\prime \\ \Phi(T,S) K_S(t) &  K_T(t)\end{bmatrix} 
\end{align*}
and
\begin{align*}  \mu&=\begin{bmatrix}\mu_1 \\ \mu_2\end{bmatrix}= \begin{bmatrix} g_S(t)+\Phi(S,t) x \\ g_T(S)+\Phi(T,S) g_S(t) +\Phi(T,t) x \end{bmatrix}= \begin{bmatrix} g_S(t) + \Phi(S,t) x \\ g_T(t)+\Phi(T,t) x\end{bmatrix}.
\end{align*}
Therefore, conditional on $X_t=x$,
\begin{equation}
\label{eq:LX_S+X_T} \begin{bmatrix}L X_S +\eta \\ X_T \end{bmatrix} = \begin{bmatrix} L & 0_{m \times d} \\ 0_{d \times d} & I_{d \times d} \end{bmatrix} \begin{bmatrix} X_S \\ X_T \end{bmatrix} +\begin{bmatrix} \eta \\ 0 \end{bmatrix} \sim 
N_{m+d}\left(\begin{bmatrix} L\mu_1  \\ \mu_2\end{bmatrix}, U(t)^{-1}  
\right), 
\end{equation}
where $U(t)$ denotes the precision matrix, defined in equation \eqref{eq:U}.
This implies that
\[
\tilde R(t,x)  = 
	-\frac{m+d}{2}\log (2\pi)-\frac12\log \left|U(t)^{-1}\right|      - \frac12 \begin{bmatrix} v_S-L\mu_1 \\ v_T -\mu_2\end{bmatrix}^\prime U(t) \begin{bmatrix} v_S-L\mu_1 \\ x_T -\mu_2\end{bmatrix}. 
\]
 It may appear that $x$ does not show up in the formula, but is appears in both $\mu_1$ and $\mu_2$.  Next, we need to take the gradient with respect to $x$. This gives 
\[ \tilde{r}(t,x)=  \begin{bmatrix} L \Phi(S,t) \\ \Phi(T,t) \end{bmatrix}^\prime U(t)\begin{bmatrix} v_S-L \mu_1\\ x_T -\mu_2\end{bmatrix}. \] Negating and differentiating once more yields the expression for $\tilde{H}$. 

\subsection{Proof of corollary \ref{corr:pull-witherror}}
\label{sec:corr:pull-witherror}

We have $\Phi(s,t)=I$ for all $s$ and $t$. This implies
	\[ U(t)^{-1} = (S-t) \begin{bmatrix} L\tilde{a} L' +(S-t)^{-1} \Sigma& L\tilde{a} \\ \tilde{a} L' &\frac{T-t}{S-t}  \tilde{a} \end{bmatrix} . \]
The Schur complement of this matrix  is 
\begin{align*}& \left((S-t)L\tilde{a} L^\T + \Sigma -  (S-t) L \tilde a  \tfrac1{T-t}\tilde{a}^{-1} \tilde a L^\T(S-t) \right)^{-1}  \\ & \quad =  \left(\frac{(S-t)(T-S)}{T-t} L \tilde a L^\T + \Sigma\right)^{-1}  =  \frac{T-t}{(S-t)(T-S)} N(t).\end{align*}
Applying the formula for the inverse of a partitioned matrix gives  
\begin{equation*} 
\begin{split}
U(t)&=\begin{bmatrix} \frac{T-t}{(S-t)(T-S)} N(t)& -\frac1{T-S} N(t)L  \\ -\frac1{T-S} L^\T N(t) & \tfrac1{T-t}\tilde{a}^{-1} -\tfrac1{T-t}\tilde{a}^{-1}\tilde{a} (S-t)L' \frac{T-t}{(S-t)(T-S)} N(t) (S-t) L \tilde{a}\tfrac1{T-t}\tilde{a}^{-1}  \end{bmatrix}\\
&=\begin{bmatrix} \frac{T-t}{(S-t)(T-S)} N(t)& -\frac1{T-S} N(t) L \\ -\frac1{T-S} L^\T N(t) & \tfrac1{T-t}\tilde{a}^{-1} -   \frac{S-t}{(T-t)(T-S)}L^\T  N(t) L   \end{bmatrix}\end{split}
 \end{equation*}
Next, we compute 
\[ W=\begin{bmatrix} W_1 & W_2 \end{bmatrix}=\begin{bmatrix} L' & I_{d\times d} \end{bmatrix} U  . \]
We have
\[ W_1= \frac{T-t}{(S-t)(T-S)} L^\T N(t)-\frac1{T-S} L^\T N(t) =\frac1{S-t} L^\T N(t) \]
and 
\begin{align*} W_2&=-\frac1{T-S} L^\T N(t) L +   \frac1{T-t} \tilde{a}^{-1} + \frac{S-t}{(T-t)(T-S)} L^\T N(t) L\\&=   \frac1{T-t} \tilde{a}^{-1}- \frac1{T-t} L^\T N(t) L.
\end{align*}
The result for $\tilde{r}$ now follows upon computing
\[ W_1 (v_S-L g_S(t)-Lx) + W_2 (x_T- g_T(t) - x). \]
The expression for $\tilde{H}$ follows from \[ \tilde{H}(t)=\begin{bmatrix} L^\T & I\end{bmatrix} U(t) \begin{bmatrix} L \\ I \end{bmatrix} =\begin{bmatrix} W_1 & W_2\end{bmatrix}\begin{bmatrix} L \\ I \end{bmatrix}  . \]
 To assess the  behaviour of the pulling term in equation \eqref{eq:noisy-pull} as $t\uparrow S$, we  write
\[	N(t)=(S-t) \left( (S-t) L\tilde{a} L^\T +\frac{T-t}{T-S} \Sigma\right)^{-1}. \]
Hence it follows that
\[ \frac{Q(t)}{S-t}= L^\T \frac{N(t)}{S-t} L \to L^\T \Sigma^{-1} L, \qquad t\uparrow S \]
and 
\[ \frac{Q(t)}{T-t} =  \frac{Q(t)}{S-t}\frac{S-t}{T-t} \to O, \qquad t\uparrow S \]
with $O$ denoting a matrix with zeroes. 

\textbf{Acknowledgement.}
M.\,S.~is supported by the European Research Council under ERC Grant Agreement 320637.

\bibliographystyle{harry}
\bibliography{lit}

\end{document}